    \pgfplotsset{compat=1.18}
\def\BibTeX{{\rm B\kern-.05em{\sc i\kern-.025em b}\kern-.08em
    T\kern-.1667em\lower.7ex\hbox{E}\kern-.125emX}}
\newtheorem{thm}{Theorem}
\newtheorem{clm}{Claim}
\newtheorem{rem}{Remark}
\newcommand\cA{\mathcal{A}}
\newcommand\cB{\mathcal{B}}
\newcommand\cH{\mathcal{H}}
\newcommand\cX{\mathcal{X}}
\newcommand{\adv}{\ensuremath{\mathsf{adv}}}
\title{Robust Hypothesis Testing with Abstention}
\author{\IEEEauthorblockN{Malhar Managoli\IEEEauthorrefmark{1}, K. R. Sahasranand\IEEEauthorrefmark{2}, Vinod M. Prabhakaran\IEEEauthorrefmark{1}}
\IEEEauthorblockA{\IEEEauthorrefmark{1}School of Tech. and Comp. Sc., Tata Institute of Fundamental Research, Mumbai. \{malhar.managoli,vinodmp\}@tifr.res.in}
\IEEEauthorblockA{\IEEEauthorrefmark{2}Mehta Family School of DS \& AI, Indian Institute of Technology Palakkad.~sahasranand@iitpkd.ac.in}
}
\begin{document}

\maketitle

\begin{abstract} %THIS PAPER IS ELIGIBLE FOR THE STUDENT PAPER AWARD. 
We study the binary hypothesis testing problem where an adversary may potentially corrupt a fraction of the samples. The detector is, however, permitted to abstain from making a decision if (and only if) the adversary is present. We consider a few natural ``contamination models'' and characterize for them the trade-off between the error exponents of the four types of errors --  errors of deciding in favour of the incorrect hypothesis when the adversary is present and errors of abstaining or deciding in favour of the wrong hypothesis when the adversary is absent, under the two hypotheses.
\end{abstract}
All proofs are available in the Appendices.

\section{Introduction}\label{sec:introduction}
In decision making settings such as disease diagnosis and semi-autonomous vehicles, it may be preferable for the system to abstain from making a decision and yield control to an external entity or a human decision maker when the samples are ``out-of-distribution.''  Motivated by this (and modelling ``out-of-distribution'' samples as adversarially generated), we study the following binary hypothesis testing problem between distributions $P_0$ v/s $P_1$. Suppose $n$ independently and identically distributed (i.i.d.) samples $X^n$ are drawn from one of these distributions. Unbeknownst to the detector, the samples {\em might} be intercepted by an adversary who may manipulate some of the samples. The detector observes these potentially contaminated samples. The detector's goal is to determine the underlying distribution ($P_0$ or $P_1$) from which $X^n$ was drawn whenever the samples are uncontaminated. The detector is allowed to abstain from making a decision if the samples are contaminated\footnote{A possible refinement to this model, which we do not consider here, is to require the detector to determine the underlying distribution up to a certain level of corruption by the adversary and permit abstentions only when the corruption exceeds this threshold.}. However, when the samples are uncontaminated, abstaining (in addition to making an incorrect decision) counts as an error; and so does making an incorrect decision when the samples are contaminated.

%This results in four types of errors \dots

There are several strands of related work. Classification with abstention (sometimes also referred to as rejection) dates back to the works of Chow~\cite{chow1957optimum,chow1970optimum} who obtained the optimal Bayes detector and studied the trade-off between error rate and rejection rate. More recently both  passive~\cite{herbei2006classification,bartlett2008classification,yuan2010classification,wegkamp2007lasso,grandvalet2008support,cortes2016learning} and active~\cite{shekhar2021active,zhu2022efficient,denis2024active} versions of the problem have been extensively investigated. Some of the works in the information theory literature which explicitly or otherwise consider abstention include~\cite{grigoryan2011multiple,sason2012moderate,lalitha2016reliability}. Another relevant line of work is that of robust statistics~\cite{tukey1960survey,huber1964robust,huber1965robust,levy2009robust} which addresses modelling errors by considering contamination models (i.e., a family of deviations from the nominal model) and design detectors/estimators with good worst-case performance over them. This problem has attracted significant recent attention~\cite{lai2016agnostic,charikar2017learning,qiao2018learning,diakonikolas2019robust,chen2020learning,jain2020optimal,jain2021robust} where computationally efficient robust estimators have been developed (see~\cite{diakonikolas2023algorithmic} for a recent survey on high-dimensional robust statistics). Closest to our work are a few recent studies~\cite{goldwasser2020identifying,kalai2021towards,goel2023adversarial} which have considered abstention under contamination models. In~\cite{goldwasser2020identifying,kalai2021towards} an algorithm trained on labelled test samples makes predictions on test samples, some of which may be adversarially corrupted; the algorithm is allowed to abstain on corrupted samples and the goal is to achieve small test error and false abstentions. A sequential version of the problem is studied in~\cite{goel2023adversarial}.

We consider three contamination models -- (i) memoryless ingress, where each of the $n$ locations is independently, with probability $\varepsilon$, made available to the adversary to replace; (ii) fixed weight uniform ingress, where $n\varepsilon$ out of $n$ locations are selected uniformly at random and made available for replacement; and (iii) strong contamination, where the adversary may choose $n\varepsilon$ locations to replace. In all cases the adversary makes its choice of what to replace the samples with (and, in strong contamination, which samples to replace) after observing the uncorrupted $X^n$. We characterize the optimal trade-off between the error exponents~\cite{hoeffding1965asymptotically,blahut1974hypothesis,csiszar1971error} for the four types of errors -- worst-case errors of deciding in favour of the incorrect hypothesis when the samples are contaminated and errors of abstaining or deciding in favour of the wrong hypothesis when the samples are uncontaminated. %All missing proofs may be found in the extended version \cite{??}.

\subsubsection*{Notation}
We use $[n]$ to denote the set $\{1,2,\ldots,n\}$. We write $x^n$ to denote an $n$-length vector $(x_1,\ldots,x_n)$. For $z^n\in\{0,1\}^n$, let $x^n|_{z^n}$ denote the vector obtained by ``restricting'' $x^n$ to the indices $i \in [n]$ such that $z_i=1$. Also, let $\overline{z^n}:=1^n-z^n$ be the complement of $z^n$. The Hamming weight of $z^n$ is denoted $\mathrm{wt}(z^n)$. Random variables are denoted by uppercase letters. We say \emph{$X^n$ are i.i.d. $P$} when $X_i, i \in [n]$ are independently and identically distributed with common distribution $P$. We denote this by $X^n\sim P^n$.
We denote the type of $x^n$ by $P_{x^n}$. 
Let $\Delta_n(\cX)$ denote the set of types of length $n$ on $\cX$ and let $\Delta(\cX)$ denote the set of all distributions on $\cX$. We omit the $(\cX)$ when the alphabet is clear from the context. %We say that \emph{$Y^n$ are i.i.d. $P$} when $Y_i, i \in [n]$ are independently and identically distributed with common distribution $P$. 
Let
%\begin{align*}
$B_{\mathrm{KL}}(p,r):=\{q\in\Delta: D(q\|p) \le r\}$ and $B_{\mathrm{TV}}(p,r):=\{q\in\Delta: d_{\mathrm{TV}}(q,p) \le r\}$
%\end{align*}
denote balls of radius $r$ with respect to Kullback-Leibler (KL) divergence and total variation distance, respectively, around the probability distribution $p$. %Let
%\[
%B_{\mathrm{KL}}^{(n)}(p,r) := B_{\mathrm{KL}}(p,r)~\cap~\Delta_n.
%\]
The KL divergence between Bernoulli distributions with parameters $\rho$ and $\varepsilon$ is denoted by $D_2(\rho\|\varepsilon)$.
%We use $B^c$ to denote the complement of the set $B$. 

\section{Problem Statement} \label{sec:problem}
Let $\cX$ be a finite alphabet and $P_0, P_1$ be distinct distributions on $\cX$.
For simplicity, we assume throughout, that $P_0,P_1$ have full support.
The uncorrupted samples $X^n=(X_1,X_2,\ldots,X_n)$ are i.i.d. $P_0$ under hypothesis $\cH_0$ and are i.i.d. $P_1$ under hypothesis $\cH_1$. In the absence of an adversary, the samples %$Y^n$ 
observed by the (potentially randomized) test/detector $\phi_n:\cX^n\to\{0,1,\bot\}$ are identical to $X^n$. When present, the adversary may modify the uncorrupted samples $X^n$, subject to certain constraints, to produce the samples $Y^n\in\cX^n$ observed by the test. We will consider a few different forms of constraints on the adversary. To state the problem generally, let $\cA_{\varepsilon}$ denote the set of all possible (potentially randomized) actions the adversary may take, where the parameter $\varepsilon>0$ denotes the level/budget of corruption. For a given test $\phi_n$, we define four types of errors (we will specify $E^{A}_{1|0}$ and $E^{A}_{0|1}$ for each model more explicitly in the sequel):
\begin{align*}
   E_{1\perp|0}(n)&= \mathbb{P}_{X^n\sim P_0^{n}}\left(\phi_n(X^n) \in \{1,\perp\}\right) \tag*{(no adversary)}\\
   E_{0\perp|1}(n)&= \mathbb{P}_{X^n\sim P_1^{n}}\left(\phi_n(X^n) \in \{0,\perp\}\right) \tag*{(no adversary)}\\
   E^{A}_{1|0}(n)&=\mathbb{P}_{X^n\sim P_0^{n},A}\left(\phi_n(Y^n)=1\right) ~~~~(\text{adversary}~A\in \cA_\varepsilon)\\
   E^{A}_{0|1}(n)&= \mathbb{P}_{X^n\sim P_1^{n},A}\left(\phi_n(Y^n)=0\right) ~~~~(\text{adversary}~A\in \cA_\varepsilon)\\
   E^{\mathsf{adv}}_{1|0}(n)&=\sup_{A\in\cA_{\varepsilon}} E^{A}_{1|0}(n) \tag*{(worst-case error)}\\
   E^{\mathsf{adv}}_{0|1}(n)&=\sup_{A\in\cA_{\varepsilon}} E^{A}_{0|1}(n) \tag*{(worst-case error)}
\end{align*}
i.e., when the adversary is absent, both abstaining and incorrectly detecting the hypothesis count as errors whereas in the presence of an adversary, an error is incurred only when the incorrect hypothesis is detected. 

We say that a positive\footnote{For simplicity, we exclude the ``Chernoff-Stein'' case where some of the exponents may be 0. %The results extend and the regions are continuous as we will discuss in a full version.} 
}
quadruple $\left(\lambda_{1\perp|0}, \lambda_{0\perp|1},\lambda^{\adv}_{1|0}, \lambda^{\adv}_{0|1}\right)$ of exponents is achievable if there is a sequence of tests such that 
\begin{align}
    \lambda_{1\perp|0}&\geq \liminf_{n\to\infty}-\frac{\log E_{1\perp|0}(n)}{n},\label{eq:non-adv|0}\\
    \lambda_{0\perp|1}&\geq \liminf_{n\to\infty}-\frac{\log E_{0\perp|1}(n)}{n},\label{eq:non-adv|1}\\
    \lambda^{\mathsf{adv}}_{1|0}&\geq \liminf_{n\to\infty}-\frac{\log E^{\adv}_{1|0}(n)}{n},\label{eq:adv|0}\\
    \lambda^{\mathsf{adv}}_{0|1}&\geq \liminf_{n\to\infty}-\frac{\log E^{\adv}_{0|1}(n)}{n}.\label{eq:adv|1}
\end{align}
We are interested in characterizing the set $\Lambda$ of all achievable quadruples.

Notice that $E_{1\perp|0}(n)= \mathbb{P}_{X^n\sim P_0^{n}}\left(\phi_n(X^n)\neq 0\right)$ and $E_{0\perp|1}(n)= \mathbb{P}_{X^n\sim P_1^{n}}\left(\phi_n(X^n) \neq 1\right)$ are the (non-adversarial) type 1 and type 2 errors for the binary hypothesis testing problem of $P_0$ v/s $P_1$. Hence, the set of achievable pairs of $(\lambda_{1\perp|0}, \lambda_{0\perp|1})$ is given by the closure\footnote{Strictly speaking, the intersection of the closure with $\mathbb{R}_{>0}^2$ since we exclude the Chernoff-Stein case. This qualification applies to similar statements of the $\Lambda$ regions in the sequel and are omitted.} of the pairs satisfying~\cite[Theorem~16.3]{polyanskiy2024information}
\begin{align}
    B_{\mathrm{KL}}(P_0,\lambda_{1\perp|0})\cap B_{\mathrm{KL}}(P_1,\lambda_{0\perp|1})= \varnothing. \label{eq:non-adv-tradeoff}
\end{align}

%\todo[inline]{Let $z^n \in \{0,1\}^n$. The adversary observes all the samples $X^n$ and replaces the sample $X_k$ with a sample of its choice whenever $z_k = 1, k \in [n]$. We consider the different ways of picking $Z^n$.}
%{\begin{enumerate}
%    \item $Z^n$ are i.i.d. $\mathrm{Ber}(\varepsilon)$.
%    \item $Z^n \sim \mathrm{Unif}\left\{z^n \in \{0,1\}^n: \sum_{i=1}^n z_i = n\varepsilon\right\}$.
%    \item Adversary chooses $z^n$ with Hamming weight at most $n\varepsilon$.
%\end{enumerate}}
%\todo[inline]{The aforementioned ways of choosing the indices to corrupt are instances of what are referred to in the robust statistics literature~\cite{diakonikolas2023algorithmic} as \emph{adaptive contamination models} (and the latter referred to as \emph{strong contamination model}) -- by contrast, under non-adaptive contamination models the adversary is weaker and does not observe the uncorrupted samples.}

\section{Memoryless Ingress Contamination} \label{sec:Bernoulli}
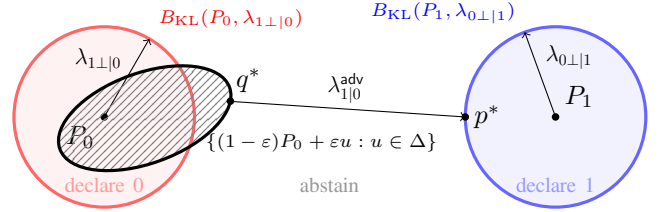
\begin{figure}[H]
    \centering
    \begin{tikzpicture}
        \filldraw[radius=1.2, color=red!60, fill=red!5, very thick](0,0) circle;
        \draw (0,-0.9) node[font = \footnotesize, color=red!40]{declare 0};
        \fill (0,0) circle[radius=0.5mm];
        \draw[->] (0,0) node[below left]{$P_0$}--node[above left, font = \Large, scale=0.6]{$\lambda_{1\perp|0}$} ++({1.2*cos(60)},{1.2*sin(60)})node[color=red!100, font = \scriptsize, above right]{$B_{\mathrm{KL}}(P_0,\lambda_{1\perp|0})$};

        \filldraw[radius=1.2, color=blue!60, fill=blue!5, very thick](6,0) circle;
        \draw (6,-0.9) node[font = \footnotesize, color=blue!40]{declare 1};
        \fill (6,0) circle[radius=0.5mm];
        \draw[->] (6,0) node[above right]{$P_1$} -- node[above right, font = \Large, scale=0.6]{$\lambda_{0\perp|1}$} ++ ({1.2*cos(110)},{1.2*sin(110)}) node[color=blue!100, font = \scriptsize, above left]{$B_{\mathrm{KL}}(P_1,\lambda_{0\perp|1})$} ;

        \draw[x radius=1.2, y radius=0.6, rotate=20, very thick, pattern={north east lines}, pattern color = black!50] (0.5,-0.2) ellipse;
        \draw[->] ({0.5+1.2*cos(20)},{-0.2+1.2*sin(20)}) node[above right]{$q^*$} -- node[above, scale=0.6,font=\Large]{$\lambda^{\adv}_{1|0}$}({6+1.2*cos(180)},{1.2*sin(180)}) node[right]{$p^*$};
        \fill ({0.55+1.2*cos(20)},{-0.2+1.2*sin(20)}) circle[radius=0.5mm];
        \fill ({6+1.2*cos(180)},{1.2*sin(180)}) circle[radius=0.5mm];
        \draw ({.1+1.2*cos(20)},{-0.5+1.2*sin(20)}) node[below right,font=\scriptsize]{$\{(1-\varepsilon)P_0+\varepsilon u:u\in\Delta\}$};
        \draw (3,-.9) node[font=\footnotesize, color = black!40]{abstain};
        
        \end{tikzpicture}
    \caption{Memoryless ingress contamination: The detector declares 0 in the red region, declares 1 in the blue region, and abstains in the white region. The black oval represents $\{(1-\varepsilon)P_0+\varepsilon u:u\in\Delta\}$ which is the set of possible distributions of $Y^n$ when the adversary chooses the replacement samples $\sim u$ i.i.d. and $q^*=(1-\varepsilon)P_0+\varepsilon u^*$, where $(u^*,p^*)$ is the minimiser of (\ref{eq:Bernoulli-1|0}). }
    \label{fig:ber}
\end{figure}

We start with a simple and somewhat weak model for the adversary. Suppose for each $i\in[n]$, whether or not the adversary (when present) will be permitted to replace the sample $X_i$ is determined by a Bernoulli random variable $Z_i\sim \text{Ber}(\varepsilon)$ i.i.d.; furthermore, $Z^n$ is independent of $X^n$. i.e., with probability $\varepsilon$, independent of everything else, each sample is made available to the adversary to replace. We will assume that the adversary may choose the samples to replace $X^n|_{Z^n}$ with after observing $(X^n,Z^n)$. We note here that, while we prove the achievability for this {\em omniscient} adversary, we can show our converse for a weaker {\em oblivious} adversary who is unaware of $X^n,Z^n$ (and specifies an $n$ length vector of potential replacements from which the first $\mathrm{wt}(Z^n)$ elements are used to replace $X^n|_{Z^n}$); thus, the result holds for all models in between these extremes. 
%(Also note that since we are interested in the worst-case error (also because we are interested in the exponent, and arbitrarily large number of uncorrupted samples are available to the adversary), the adversary may be assumed to know the true hypothesis.\todo{confusing? leave out?}).
We note in passing two connections to models from the literature. The oblivious adversary who uses an i.i.d. distribution from the replacement samples is equivalent to Huber's contamination model~\cite{huber1992robust}. If the adversary is allowed only {\em causal} access to $(X^i,Z^i)$ when choosing the replacement $Y_i$, this model can be shown to be related to a special case of adversarial binary hypothesis testing (without abstention) of~\cite{brandao2020adversarial}; however, the results from there do not directly apply to our setting.

Formally, our set of adversaries are all $A:\cX^n\times\{0,1\}^n\to\cX^n$ such that $(A(x^n,z^n))|_{\overline{z^n}}=x^n|_{\overline{z^n}}$.  And we define \[E^{A}_{1|0}(n)=\mathbb{P}_{(X^n,Z^n)\sim P_0^{n}\times(\text{Ber}(\varepsilon))^n}\left(\phi_n(A(X^n,Z^n))=1\right).\]
$E^{A}_{0|1}(n)$ is similarly defined. With these, let $\Lambda^{\mathsf{Ber}(\varepsilon)}$ denote the set of achievable exponent quadruples. Our main result for this section is the following characterization %of $\Lambda^{\mathsf{Ber}(\varepsilon)}$ 
(see Figure~\ref{fig:ber}).

\begin{thm}\label{thm:Bernoulli}
$\Lambda^{\mathsf{Ber}(\varepsilon)}$ is the closure of the set of $(\lambda_{1\perp|0}, \lambda_{0\perp|1},\lambda^{\adv}_{1|0}, \lambda^{\adv}_{0|1})$ such that
    \begin{align}
        &B_{\mathrm{KL}}(P_0,\lambda_{1\perp|0})\cap B_{\mathrm{KL}}(P_1,\lambda_{0\perp|1})= \varnothing, \label{eq:Bernoulli-nonadv}\\
        \lambda^{\adv}_{1|0}&\le \min_{{u\in\Delta,\, p\in B_{\mathrm{KL}}(P_1,\lambda_{0\perp|1})}} D(p\| (1-\varepsilon) P_0 + \varepsilon u), \label{eq:Bernoulli-1|0}\\
        %\min_{\substack{p\in B_{\mathrm{KL}}(P_1,\lambda_{0\perp|1})\\ u\in\Delta}} D(p\| (1-\varepsilon) P_0 + \varepsilon u) \label{eq:Bernoulli-1|0}\\
        \lambda^{\adv}_{0|1}&\le \min_{{u\in\Delta,\, p\in B_{\mathrm{KL}}(P_0,\lambda_{1\perp|0})}} D(p\| (1-\varepsilon) P_1 + \varepsilon u). \label{eq:Bernoulli-0|1}
        %\min_{\substack{p\in B_{\mathrm{KL}}(P_0,\lambda_{1\perp|0})\\ u\in\Delta}} D(p\| (1-\varepsilon) P_1 + \varepsilon u) \label{eq:Bernoulli-0|1}
    \end{align}
\end{thm}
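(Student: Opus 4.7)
The plan is to establish achievability (against the omniscient adversary) via a type-based detector and a matching converse (using only the weaker oblivious adversary), bridged by a variational identity. The non-adversarial trade-off~(\ref{eq:Bernoulli-nonadv}) is the classical binary hypothesis testing bound; both directions follow directly from~(\ref{eq:non-adv-tradeoff}).

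For \textbf{achievability}, I would fix a small $\delta > 0$ and use the detector of Figure~\ref{fig:ber}: declare $0$ when $P_{Y^n} \in B_{\mathrm{KL}}(P_0, \lambda_{1\perp|0} - \delta)$, declare $1$ when $P_{Y^n} \in B_{\mathrm{KL}}(P_1, \lambda_{0\perp|1} - \delta)$, and abstain otherwise; the two balls are disjoint by~(\ref{eq:Bernoulli-nonadv}) for small enough $\delta$, and Sanov's theorem gives the non-adversarial exponents, which tend to the claimed values as $\delta \to 0$. For $E^{\adv}_{1|0}$, I would decompose by $k := \mathrm{wt}(Z^n)$ and $\nu_0 := P_{X^n|_{\overline{Z^n}}}$, noting that $P_{Y^n} = (1-k/n)\nu_0 + (k/n)\nu_1$ where the adversary freely picks $\nu_1$, while $\Pr(\mathrm{wt}(Z^n) = k,\, P_{X^n|_{\overline{Z^n}}} = \nu_0) \doteq \exp(-n[D_2(k/n\|\varepsilon) + (1-k/n)D(\nu_0\|P_0)])$ by method-of-types estimates. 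Summing over the adversary's success set and applying Laplace yields the error exponent
\[
\min_{\substack{\rho, \nu_0, \nu_1, q:\ q=(1-\rho)\nu_0+\rho\nu_1,\\ q \in B_{\mathrm{KL}}(P_1, \lambda_{0\perp|1})}} D_2(\rho\|\varepsilon) + (1-\rho)D(\nu_0\|P_0),
\]
which must be matched to the RHS of~(\ref{eq:Bernoulli-1|0}). This reduces to the key identity
\[
\min_{u \in \Delta} D(q \,\|\, (1-\varepsilon)P_0 + \varepsilon u) = \min_{\substack{\rho, \nu_0, \nu_1 \in \Delta:\\ q=(1-\rho)\nu_0+\rho\nu_1}} D_2(\rho\|\varepsilon) + (1-\rho)D(\nu_0\|P_0),
\]
which I would prove by a chain-rule coupling: consider auxiliary $(Z, W)$ with $Z\sim \mathrm{Ber}(\varepsilon)$, $W \sim P_0$ if $Z=0$, $W \sim u$ if $Z=1$ (so $W$'s marginal is $(1-\varepsilon)P_0+\varepsilon u$), and analogous $(Z', W')$ with parameters $(\rho, \nu_0, \nu_1)$ (marginal $q$). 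The chain rule gives $D((Z',W')\|(Z,W)) = D_2(\rho\|\varepsilon) + (1-\rho)D(\nu_0\|P_0) + \rho D(\nu_1\|u)$; data processing combined with the choice $u = \nu_1$ (which kills the last term) yields ``$\ge$'', while ``$\le$'' follows from Sanov applied to the two-stage sampling of $((1-\varepsilon)P_0+\varepsilon u)^n$.

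For the \textbf{converse}, I would symmetrize $\phi_n$ (which cannot increase adversarial errors and leaves non-adversarial ones unchanged), so its acceptance regions become unions of type classes. For any $u \in \Delta$ and any $p$ with $D(p\|P_1) < \lambda_{0\perp|1}$, consider the oblivious adversary injecting i.i.d.\ samples from $u$, so that $Y^n \sim Q^n$ with $Q := (1-\varepsilon)P_0+\varepsilon u$. The non-adversarial type-$2$ budget forces the type class $T_n(p)$ to lie in $\{y^n : \phi_n(y^n) = 1\}$, since otherwise symmetry would exclude the entire class, contradicting $P_1^n(T_n(p)) \doteq e^{-nD(p\|P_1)} > e^{-n\lambda_{0\perp|1}}$ for large $n$. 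Hence $E^{\adv}_{1|0}(n) \ge Q^n(T_n(p)) \doteq e^{-nD(p\|Q)}$, giving $\lambda^{\adv}_{1|0} \le D(p\|(1-\varepsilon)P_0+\varepsilon u)$; minimizing over $u$ and $p$ (and taking closure) yields~(\ref{eq:Bernoulli-1|0}), and the symmetric argument yields~(\ref{eq:Bernoulli-0|1}). The \textbf{main obstacle} is the identity above --- specifically its ``$\le$'' direction and carefully handling the boundary case where $q$ does not dominate $(1-\varepsilon)P_0$ componentwise, which forces the optimal $u$ to the simplex boundary.
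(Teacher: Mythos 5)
Your proposal matches the paper's proof in all essentials: the same type-threshold detector, the converse via an oblivious i.i.d.-$u$ adversary that reduces the problem to simple testing of $(1-\varepsilon)P_0 + \varepsilon u$ versus $P_1$, the achievability decomposition conditioning on $\mathrm{wt}(Z^n)$ and the type of the uncorrupted samples, and the same chain-rule-plus-data-processing coupling (with $u=\nu_1$) to show the achievability exponent dominates the converse form. One clarification worth making: the ``$\le$'' direction of your variational identity, and the simplex-boundary case you single out as the ``main obstacle,'' are not actually needed for the theorem --- the ``$\ge$'' direction you get from the chain rule already closes the achievability--converse sandwich, and ``$\le$'' then follows for free rather than requiring a separate Sanov argument; this is exactly the observation the paper records in the remark after Claim~\ref{clm:Bernoulli-equivalence}. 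Also, your converse re-derives the non-adversarial trade-off by symmetrization and type-class counting where the paper simply invokes~\cite[Theorem~16.3]{polyanskiy2024information}; both are correct, the latter is just leaner.
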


Notice that \eqref{eq:Bernoulli-nonadv} is just the trade-off between the non-adversarial error exponents in \eqref{eq:non-adv-tradeoff}. Furthermore, for a pair $(\lambda_{1\perp|0},\lambda_{0\perp|1})$ which satisfies \eqref{eq:Bernoulli-nonadv}, the result shows that there is no additional tradeoff between the remaining two (adversarial) exponents, something which will become clear from the discussion below. We start with the converse which explains the form the upper bounds in \eqref{eq:Bernoulli-1|0} and \eqref{eq:Bernoulli-0|1} take.

\paragraph*{Converse}To see the converse, it suffices to show \eqref{eq:Bernoulli-1|0} (since \eqref{eq:Bernoulli-0|1} follows by symmetry, and \eqref{eq:Bernoulli-nonadv} follows from \eqref{eq:non-adv-tradeoff}). For this, consider the following (oblivious) adversary $A$ under hypothesis $\cH_0$ who replaces every sample $X_i$, $i\in[n]$ such that $Z_i=1$, by sampling from a distribution $u\in\Delta$ independently\footnote{Notice that we are considering a randomized adversary for showing the converse. This is without loss of generality since the worst-case error over deterministic adversaries is no smaller than that of a randomized adversary.}. Then, the resulting distribution at the detector in the presence of this adversary is the mixture distribution $P_0':=(1-\varepsilon)P_0+\varepsilon u$ i.i.d. Hence, the trade-off between exponents $\lambda^{\adv}_{1|0}$ and $\lambda_{0\bot|1}$ cannot be any better than that of the (non-adversarial) hypothesis testing problem of $P_0'$ v/s $P_1$ (since its type 1 error event is declaring 1 under $P_0'$ and type 2 error is {\em not} declaring 1 under $P_1$). Thus, from~\cite[Theorem 16.3]{polyanskiy2024information}, $$\lambda^{\adv}_{1|0}\le \min_{{p\in B_{\mathrm{KL}}(P_1,\lambda_{0\perp|1})}} D(p\|P_0').$$
%\[\lambda^{\adv}_{1|0}\le \min_{{p\in B_{\mathrm{KL}}(P_1,\lambda_{0\perp|1})}} D(p\| (1-\varepsilon) P_0 + \varepsilon u).\]
We arrive at \eqref{eq:Bernoulli-1|0} by minimizing over the choice of the distribution $u$ used by the adversary (see Figure~\ref{fig:ber}).

\paragraph*{Achievability} To show the achievability, for a pair $(\lambda_{1\perp|0},\lambda_{0\perp|1})$ which satisfy \eqref{eq:Bernoulli-nonadv}, consider the following detector (for $\delta >0$)
\begin{align}
\phi_n(y^n)=
    \begin{cases}
        \text{decide } 0, &D(P_{y^n}\|P_0)\leq  \lambda_{1\perp|0}+\delta\\
        \text{decide } 1, &D(P_{y^n}\|P_1)\leq  \lambda_{0\perp|1}+\delta\\
        \text{abstain } (\bot),&\text{otherwise}
\end{cases}\label{eq:detector]}
\end{align}
Since $(\lambda_{1\perp|0},\lambda_{0\perp|1})$ satisfy \eqref{eq:Bernoulli-nonadv}, the detector is clearly well-defined (for sufficiently small $\delta$). Since
\begin{align*} 
E_{1\bot|0}(n)
&=\mathbb{P}_{X^n\sim P_0^{n}}\left(\phi_n(X^n) \in \{1,\perp\}\right)\\
&=\mathbb{P}_{X^n\sim P_0^{n}}\left(D(P_{y^n}\|P_0)> \lambda_{1\perp|0}+\delta\right)\\
&\leq \exp(-n(\lambda_{1\perp|0}+\delta-o(1))),
\end{align*}
where the inequality follows from Sanov's theorem~\cite[Theorem 11.4.1]{cover1999elements},
we have \eqref{eq:non-adv|0}. Similarly, \eqref{eq:non-adv|1} follows. For the adversarial exponents, we proceed in steps. 
Consider the following claim:
\begin{clm}\label{clm:Bernoulli-achievability}
$\liminf_{n\rightarrow\infty}-\frac{1}{n}\log E^{\adv}_{1|0}(n)$ is at least
    \begin{align}
        %&\lim_{n\rightarrow\infty}-\frac{1}{n}\log E_{1|0}(n)\\ &\qquad\geq
 \min_{\rho\in[0,1]}\min_{\substack{(q,v)\in 
 \cB
 %\Delta^2: D((1-\rho)q + \rho v\|P_1)\le\lambda_{0\bot|1}+\delta}
 }}D_2(\rho\|\varepsilon)+(1-\rho)D(q\|P_0),\label{eq:clm-ber-ach}
    \end{align}
where $\cB=\{(q,v)\in \Delta^2: D((1-\rho)q + \rho v\|P_1)\le\lambda_{0\bot|1}+\delta\}.$
\end{clm}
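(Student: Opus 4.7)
}
The plan is to upper bound $E^{\adv}_{1|0}(n)$ by decomposing the observed vector $Y^n$ into its untouched part and its adversarial part and then applying the method of types to each of the two independent randomness sources (the ingress mask $Z^n$ and the uncorrupted samples $X^n$). Under $\cH_0$, write $\rho_n:=\mathrm{wt}(Z^n)/n$, let $q\in\Delta_{(1-\rho_n)n}$ be the type of the untouched samples $X^n|_{\overline{Z^n}}$, and let $v\in\Delta_{\rho_n n}$ be the type of the adversarial block $Y^n|_{Z^n}$. Then the type of the observed vector is $P_{Y^n}=(1-\rho_n)q+\rho_n v$. With the detector~\eqref{eq:detector]}, declaring $1$ happens only if $D(P_{Y^n}\|P_1)\le \lambda_{0\bot|1}+\delta$; for $\delta$ small, \eqref{eq:Bernoulli-nonadv} forces the event ``decide $1$'' to be disjoint from ``decide $0$'', so this is the only error mode for $E^{\adv}_{1|0}$.

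First I would bound the two type probabilities separately. Since $Z^n$ is i.i.d.\ $\mathrm{Ber}(\varepsilon)$, for every $\rho\in\{0,1/n,\ldots,1\}$,
\begin{align*}
\mathbb{P}(\mathrm{wt}(Z^n)=\rho n)\le 2^{-n D_2(\rho\|\varepsilon)},
\end{align*}
and conditionally on $Z^n$ with $\mathrm{wt}(Z^n)=\rho n$, the block $X^n|_{\overline{Z^n}}$ is $(1-\rho)n$ i.i.d.\ $P_0$, so for each type $q$ in $\Delta_{(1-\rho)n}$,
\begin{align*}
\mathbb{P}(P_{X^n|_{\overline{Z^n}}}=q\mid Z^n)\le 2^{-n(1-\rho)D(q\|P_0)},
\end{align*}
both by standard method-of-types estimates~\cite{cover1999elements}. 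The adversary, given $(Z^n,X^n)$ (and hence $q$), can place any type $v\in\Delta_{\rho n}$ on the available slots; the worst-case adversary triggers a decision of $1$ on $(\rho,q)$ precisely when there exists $v$ with $D((1-\rho)q+\rho v\|P_1)\le\lambda_{0\bot|1}+\delta$, i.e., when $(q,v)\in\cB$ for some $v$. Let $F(\rho,q)$ denote that indicator.

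Combining these with a union bound over the (polynomially many) pairs $(\rho,q)$ of $n$-types gives
\begin{align*}
E^{\adv}_{1|0}(n)\le \mathrm{poly}(n)\max_{(\rho,q):F(\rho,q)=1} 2^{-n\bigl[D_2(\rho\|\varepsilon)+(1-\rho)D(q\|P_0)\bigr]},
\end{align*}
so that $-\tfrac{1}{n}\log E^{\adv}_{1|0}(n)$ is at least the corresponding discrete minimum, up to $o(1)$. To pass to the continuum stated in~\eqref{eq:clm-ber-ach} I would invoke continuity: the objective $D_2(\rho\|\varepsilon)+(1-\rho)D(q\|P_0)$ and the constraint $D((1-\rho)q+\rho v\|P_1)\le\lambda_{0\bot|1}+\delta$ are continuous jointly in $(\rho,q,v)$ on the relevant (compact) domain (using full support of $P_0,P_1$), and any continuous minimizer can be approximated to within $o(1)$ by types.

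The main conceptual step is recognizing that, thanks to the independence of $Z^n$ and $X^n$ and the freedom granted to the adversary on the $Z^n=1$ coordinates, the error exponent cleanly splits into the ``cost'' $D_2(\rho\|\varepsilon)$ of an atypical ingress weight plus the ``cost'' $(1-\rho)D(q\|P_0)$ of an atypical empirical distribution on the untouched coordinates, with the adversary's action constrained only through membership in $\cB$. The remaining technical step, bounding polynomial type-counting factors and controlling boundary behaviour (notably $\rho\to 0,1$), is routine for this style of argument and is not expected to be a real obstacle.
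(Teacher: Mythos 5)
Your proposal is correct and follows essentially the same route as the paper: condition on $\mathrm{wt}(Z^n)$, apply the method of types (Sanov) separately to the Bernoulli weight and to the type of the untouched block, observe that the adversary pays no exponential cost, and take a union bound over polynomially many type pairs. One small remark: the passage from the discrete (type-valued) minimum to the continuous minimum in \eqref{eq:clm-ber-ach} goes in the easy direction -- every feasible triple of $n$-types is a feasible point of the continuous program, so the discrete minimum is automatically $\ge$ the continuous one -- and does not actually require the approximation-by-types continuity argument you invoke (that argument would be needed for the reverse inequality, which the claim does not assert).
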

The interpretation of the above claim is that the exponent for error $E^A_{1|0}(n)$ is dominated by a large deviation event for $Z^n$ in which the weight of $Z^n$ is roughly $\rho n$ (which occurs with probability $\approx\exp(-nD_2(\rho\|\varepsilon))$) {\em and} a large deviation event in which the $n(1-\rho)$ uncorrupted samples have type close to $q$ (which occurs with probability $\approx\exp(-n(1-\rho)D(q\|P_0))$). The worst-case exponent corresponds to minimizing the resulting exponent over $\rho,q$ such that there is a choice $v$ for the distribution of the adversary's samples such that the resulting type $(1-\rho)q+\rho v$ of the input to the detector falls in the region where the detector declares ``1'' (i.e., $D((1-\rho)q + \rho v\|P_1)\le\lambda_{0\bot|1}+\delta$). Note that the adversary's samples do not undergo a large deviation in the dominant event and hence do not contribute to the exponent.

Taking $\delta\to 0$, we conclude
%\footnote{In taking $\delta\to0$, we are using the fact that the set $\Lambda$ is closed by definition.} 
that we achieve the exponent quadruples defined by \eqref{eq:Bernoulli-nonadv}, $\lambda_{1|0}^{\adv}$ upperbounded by the expression in Claim~\ref{clm:Bernoulli-achievability} (with $\delta=0$), and the analogous upperbound for $\lambda_{0|1}^{\adv}$. However, notice that the expression in the claim appears to be different from the one in \eqref{eq:Bernoulli-1|0}. The following claim, proved in Appendix~\ref{app:Bernoulli-equivalence}, 
completes the achievability proof.%\footnote{When taken together with the converse, the two expressions are seen to be equal.}
\begin{clm}\label{clm:Bernoulli-equivalence}
For any $\lambda \in (0,D(P_0\|P_1))$ and $\varepsilon \in (0,1)$, we have
\begin{align*}
 \min_{\rho\in[0,1]}&\min_{\substack{(q,v)\in\Delta^2: D((1-\rho)q + \rho v\|P_1)\le\lambda}}D_2(\rho\|\varepsilon)+(1-\rho)D(q\|P_0)\\ 
 &\qquad\geq
  \min_{{u\in\Delta,\; p\in B_{\mathrm{KL}}(P_1,\lambda)}} D(p\| (1-\varepsilon) P_0 + \varepsilon u).
    \end{align*}
\end{clm}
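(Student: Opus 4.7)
}

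The plan is to show the inequality pointwise: for every feasible triple $(\rho, q, v)$ on the LHS, exhibit a feasible pair $(u, p)$ on the RHS whose objective value is no larger than $D_2(\rho\|\varepsilon) + (1-\rho)D(q\|P_0)$. Taking the infimum then gives the desired bound.

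The natural choice is to set $p := (1-\rho)q + \rho v$ and $u := v$. The constraint $D((1-\rho)q+\rho v\|P_1)\le \lambda$ on the LHS is exactly $p \in B_{\mathrm{KL}}(P_1,\lambda)$, so $(u,p)$ is feasible on the RHS. The whole proof then reduces to establishing the single inequality
\begin{align*}
D\bigl(p \,\big\|\, (1-\varepsilon)P_0 + \varepsilon v\bigr) \;\le\; D_2(\rho\|\varepsilon) + (1-\rho)D(q\|P_0).
\end{align*}

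The key tool will be the log-sum inequality applied coordinatewise with $a_1 = (1-\rho)q(i)$, $a_2 = \rho v(i)$, $b_1 = (1-\varepsilon)P_0(i)$, $b_2 = \varepsilon v(i)$, which gives
\begin{align*}
\bigl((1-\rho)q(i) + \rho v(i)\bigr)\log\frac{(1-\rho)q(i)+\rho v(i)}{(1-\varepsilon)P_0(i)+\varepsilon v(i)} \le (1-\rho)q(i)\log\frac{(1-\rho)q(i)}{(1-\varepsilon)P_0(i)} + \rho v(i)\log\frac{\rho v(i)}{\varepsilon v(i)}.
\end{align*}
Summing over $i$ and separating the constant factors from the $q(i)/P_0(i)$ and $v(i)/v(i)$ ratios, the first term on the right becomes $(1-\rho)\log\frac{1-\rho}{1-\varepsilon}+(1-\rho)D(q\|P_0)$ and the second becomes $\rho\log\frac{\rho}{\varepsilon} + \rho D(v\|v) = \rho \log\frac{\rho}{\varepsilon}$. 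Together these give precisely $D_2(\rho\|\varepsilon) + (1-\rho)D(q\|P_0)$, completing the inequality.

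I do not anticipate a significant obstacle; the main insight is recognizing that the choice $u=v$ makes the ``adversary's'' contribution in the log-sum bound collapse to $\rho D(v\|v)=0$, so that only the Bernoulli divergence term $D_2(\rho\|\varepsilon)$ (coming from the mismatch between the mixing weights $\rho$ and $\varepsilon$) and the large-deviation cost $(1-\rho)D(q\|P_0)$ of the unreplaced samples survive. Minor care is needed only at the boundary cases $\rho \in \{0,1\}$, where the conventions $0\log 0 = 0$ and $D_2(0\|\varepsilon)=\log\tfrac{1}{1-\varepsilon}$, $D_2(1\|\varepsilon)=\log\tfrac{1}{\varepsilon}$ make the same argument go through verbatim.
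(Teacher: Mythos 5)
Your proof is correct and takes essentially the same route as the paper's. The paper introduces the auxiliary label $Z$ (Bernoulli with parameter $\varepsilon$ in one distribution and $\rho$ in the other), applies the data-processing inequality to pass from $D(P^{(1)}_Y\|P^{(0)}_Y)$ to $D(P^{(1)}_{ZY}\|P^{(0)}_{ZY})$, decomposes via the chain rule, and then sets $q_2=u$ so the corrupted-coordinate term vanishes; your coordinatewise log-sum inequality with the choice $u=v$ is precisely the scalar form of that same data-processing step, and yields the identical decomposition $D_2(\rho\|\varepsilon)+(1-\rho)D(q\|P_0)+\rho D(v\|v)$. The only cosmetic difference is that you argue forward (exhibiting a feasible competitor on the right for each feasible triple on the left) while the paper argues backward from the right-hand minimum, but both hinge on the same choice of replacement distribution and the same convexity fact.
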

\begin{rem}
    We note that, taken together with the converse, we may conclude that the above inequality is an equality, a fact which may not be immediately obvious.
\end{rem}

\section{Fixed Weight Uniform Ingress Contamination}\label{sec:constHamming}
\begin{figure}[h]
    \centering
    \begin{tikzpicture}
        \filldraw[radius=1.2, color=red!60, fill=red!5, very thick](0,0) circle;
        \fill (0,0) circle[radius=0.5mm];
        \draw[->] (0,0) node[above left]{$P_0$}--node[above left, font=\Large, scale=.6]{$\lambda_{1\perp|0}$}({1.2*cos(60)},{1.2*sin(60)}) node[color=red!100, font=\scriptsize, above right]{$B_{\mathrm{KL}}(P_0,\lambda_{1\perp|0})$};
        \draw (0,-.7) node[color = red!40]{declare 0};

        \filldraw[radius=1.2, color=blue!60, fill=blue!5, very thick](6,0) circle;
        \fill (6,0) circle[radius=0.5mm];
        \draw[->] (6,0) node[above right]{$P_1$} -- node[above right, font=\Large, scale=.6]{$\lambda_{0\perp|1}$}({6+1.2*cos(100)},{1.2*sin(100)}) node[color=blue!100, font=\scriptsize, above left]{$B_{\mathrm{KL}}(P_1,\lambda_{0\perp|1})$};
        \draw (6.1,-.7) node[color=blue!40]{declare 1};

        \draw[thick, rotate around={30:(4.8,0)}, pattern={north east lines}, pattern color=black!50] (4.8,0) circle[x radius=1.2, y radius=.8];
        \fill (4.8,0) circle[radius=0.5mm];
        \draw (4.8,0) node[above left]{$p^*$};
        \fill ({4.45-sqrt(.5)},0) circle[radius=0.5mm];
        \draw[->] (0,0) -- node[above, font=\Large, scale=.6]{$\qquad\lambda^{\adv}_{1|0}/(1-\varepsilon)$} ({4.45-sqrt(.5)},0) node[above left]{$q^*$};
        \draw ({6.2-0.5*sqrt(0.5)},{-1-0.5*sqrt(0.5)}) node[font=\scriptsize, text width=5cm]{$\big\{q\in\Delta:\exists u\,\mathrm{s.t.}\,\newline \quad(1-\varepsilon)q+\varepsilon u=p^*\big\}$};
        \draw (2.8,-0.7) node[color=black!40]{abstain};
        \end{tikzpicture}
    \caption{Fixed weight uniform ingress contamination: The black oval represents the set of types $q$ of uncorrupted samples ($X^n|_{\overline{Z^n}})$ for which an adversary can induce $P_{Y^n}=p^*$ (for which the detector declares 1). The distribution $p^*=(1-\varepsilon)q^*+\varepsilon u^*$, where $(q^*,u^*)$ is the minimiser of (\ref{eq:Uniform-1|0}). Under $\cH_0$, $P_{X^n|_{\overline{Z^n}}} =q^*$ happens with probability $\approx\exp(-n(1-\varepsilon)D(q^*\|P_0))$.}
    \label{fig:fixed-weight}
\end{figure}
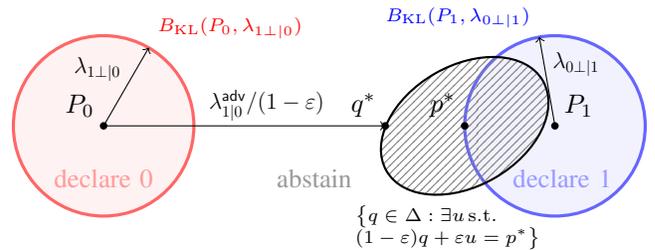
Consider the same setup as above except that the $Z^n\in\{0,1\}^n$ vector, which indicates the locations where the adversary may replace the samples, is now drawn uniformly at random from the set of weight-$\lceil n\varepsilon \rceil$ binary vectors independent of $X^n$. %(we drop $\lceil\,\rceil$ in the sequel). 
As before, we show the achievability for the omniscient adversary (who has access to $(X^n,Z^n)$ when choosing the replacement samples) while our converse can be shown using only an oblivious adversary (who specifies an $n\varepsilon$-length vector of replacement samples which are used to replace $X^n|_{Z^n}$). We denote the tradeoff region for this case by $\Lambda^{\sf{Unif}(\varepsilon)}$.
\begin{thm}\label{thm:Uniform}
$\Lambda^{\sf{Unif}(\varepsilon)}$ is the closure of all quadruples $(\lambda_{1\perp|0}, \lambda_{0\perp|1},\lambda^{\adv}_{1|0}, \lambda^{\adv}_{0|1})$ such that
    \begin{align}
        &B_{\mathrm{KL}}(P_0,\lambda_{1\perp|0})\cap B_{\mathrm{KL}}(P_1,\lambda_{0\perp|1})= \varnothing, \label{eq:Uniform-nonadv}\\
        \lambda^{\adv}_{1|0}&\le \min_{\substack{q,u\in\Delta\ : D((1-\varepsilon)q + \varepsilon u\|P_1)\le\lambda_{0\perp|1}}} (1-\varepsilon)D(q\|P_0), \label{eq:Uniform-1|0}\\
        \lambda^{\adv}_{0|1}&\le \min_{\substack{q,u\in\Delta\ : D((1-\varepsilon)q + \varepsilon u\|P_0)\le\lambda_{1\perp|0}}} (1-\varepsilon) D(q\| P_1). \label{eq:Uniform-0|1}
    \end{align}
\end{thm}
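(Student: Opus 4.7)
The plan is to prove Theorem~\ref{thm:Uniform} by the same two-step template as Theorem~\ref{thm:Bernoulli}: achievability against the omniscient adversary using the type-based detector~\eqref{eq:detector]}, and converse via a randomized oblivious adversary. As before, \eqref{eq:Uniform-nonadv} is immediate from \eqref{eq:non-adv-tradeoff}, and \eqref{eq:Uniform-0|1} follows from \eqref{eq:Uniform-1|0} by symmetry, so the only real work is \eqref{eq:Uniform-1|0}.

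For achievability I would reuse the detector in \eqref{eq:detector]}. The non-adversarial exponents follow from Sanov exactly as in Section~\ref{sec:Bernoulli}. For the adversarial exponent, the detector outputs $1$ only when $D(P_{Y^n}\|P_1)\le\lambda_{0\perp|1}+\delta$. Any omniscient adversary must satisfy $Y^n|_{\overline{Z^n}}=X^n|_{\overline{Z^n}}$ and may pick $Y^n|_{Z^n}$ freely, so, writing $u:=P_{Y^n|_{Z^n}}$, we have $P_{Y^n}=(1-\varepsilon)P_{X^n|_{\overline{Z^n}}}+\varepsilon u$ (up to $\lceil n\varepsilon\rceil/n$ rounding). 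Hence $\{\phi_n(Y^n)=1\}\subseteq\{P_{X^n|_{\overline{Z^n}}}\in\cQ_\delta\}$, where
\begin{align*}
\cQ_\delta:=\{q\in\Delta:\exists u\in\Delta,\ D((1-\varepsilon)q+\varepsilon u\|P_1)\le\lambda_{0\perp|1}+\delta\}.
\end{align*}
This event depends only on the $\lceil n(1-\varepsilon)\rceil$ i.i.d.\ $P_0$-samples $X^n|_{\overline{Z^n}}$ (which are independent of $Z^n$), so Sanov's theorem gives $E^{\adv}_{1|0}(n)\le\exp\!\bigl(-n(1-\varepsilon)\min_{q\in\cQ_\delta}D(q\|P_0)+o(n)\bigr)$; taking $\delta\downarrow 0$ yields the lower bound on $\lambda^{\adv}_{1|0}$ in \eqref{eq:Uniform-1|0}. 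Note the absence of the $D_2(\rho\|\varepsilon)$ term of Claim~\ref{clm:Bernoulli-achievability}: since $\mathrm{wt}(Z^n)$ is deterministic here, $Z^n$ contributes no large-deviation penalty.

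For the converse, fix $(q^*,u^*)$ with $p^*:=(1-\varepsilon)q^*+\varepsilon u^*$ satisfying $D(p^*\|P_1)<\lambda_{0\perp|1}-\eta$, and consider the randomized oblivious adversary whose replacement vector $w^*$ is drawn uniformly from length-$\lceil n\varepsilon\rceil$ sequences of type $u^*$. A direct counting argument shows that for each $y^n$ of type $p^*$, $\mathbb{P}(Y^n=y^n)$ is proportional to $\prod_x\binom{nP_{y^n}(x)}{\lceil n\varepsilon\rceil u^*(x)}$ (the number of $z^n$'s with $P_{y^n|_{z^n}}=u^*$), and hence depends on $y^n$ only through $P_{y^n}$. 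Consequently $Y^n\mid P_{Y^n}=p^*$ is uniform on the type class $T_{p^*}^n$, while $\{P_{Y^n}=p^*\}=\{P_{X^n|_{\overline{Z^n}}}=q^*\}$ has probability $\approx\exp(-n(1-\varepsilon)D(q^*\|P_0))$ by the standard type estimate for $\lceil n(1-\varepsilon)\rceil$ i.i.d.\ $P_0$-samples. Since $D(p^*\|P_1)<\lambda_{0\perp|1}$, the assumed non-adversarial bound $E_{0\perp|1}(n)\le\exp(-n\lambda_{0\perp|1}+o(n))$ forces $\phi_n$ to declare $1$ on all but an exponentially vanishing fraction of $T_{p^*}^n$ (else the $P_1^n$-mass on type $p^*$ alone would violate that bound). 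Combining the two factors gives $E^{\adv}_{1|0}(n)\ge\exp(-n(1-\varepsilon)D(q^*\|P_0)-o(n))$, so $\lambda^{\adv}_{1|0}\le(1-\varepsilon)D(q^*\|P_0)$; minimizing over admissible $(q^*,u^*)$ and letting $\eta\downarrow 0$ yields \eqref{eq:Uniform-1|0}.

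The main obstacle is the permutation-symmetry step in the converse: a priori the map $(X^n,Z^n,w^*)\mapsto Y^n$ is not manifestly invariant under permutations of $[n]$, because $w^*$ is an ordered vector injected into the positions where $Z^n=1$. The claim that $Y^n\mid P_{Y^n}=p^*$ is uniform on $T_{p^*}^n$ therefore needs the explicit counting identity above---symmetrizing $w^*$ over its type class is precisely what restores exchangeability. Once this symmetry is in hand, both the converse's use of the non-adversarial exponent and the achievability's Sanov bookkeeping are routine.
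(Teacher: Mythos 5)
Your proposal is correct and follows essentially the paper's own route. The achievability argument is identical: condition on the (deterministic-weight) attack locations, apply Sanov to the $n(1-\varepsilon)$ uncorrupted i.i.d.\ $P_0$ samples, and observe that the adversary contributes no large-deviation cost.

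The converse differs only in the choice of the restricted adversary. The paper has the oblivious adversary draw the $n\varepsilon$ replacement samples i.i.d.\ from a fixed $v$; exchangeability is then automatic, but one must sum over the (random) type $q_2$ of the replacement block and afterwards set $q_2=v$ to kill the extra $\varepsilon D(q_2\|v)$ term. You instead draw the replacement block uniformly from the type class $T_{u^*}^{n\varepsilon}$, which makes $P_{Y^n|_{Z^n}}=u^*$ deterministic and so $\{P_{Y^n}=p^*\}=\{P_{X^n|_{\overline{Z^n}}}=q^*\}$ exactly, dispensing with that last optimization step; the price is that you must verify exchangeability, which you do via the counting identity (though it also follows more simply from the fact that $X^n$, $Z^n$, and $W^{n\varepsilon}$ are each exchangeable and independent, so $Y^n$ is exchangeable, and any exchangeable law conditioned on its type is uniform on the type class). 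Both adversaries give the same exponent, and the rest of the converse---forcing $g_n(p^*)\ge 1-\exp(-\Omega(n))$ from the $E_{0\perp|1}$ constraint, combining with the Sanov lower bound on $\mathbb{P}[P_{Y^n}=p^*]$, then sending $\eta\to 0$ by continuity---matches the paper. The rounding of $q^*,u^*,p^*$ to valid $n(1-\varepsilon)$-, $n\varepsilon$-, and $n$-types, and the continuity argument for removing the $\eta$ (resp.\ $\delta$) slack, are left implicit but are the standard steps the paper handles in its auxiliary claims.
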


\begin{rem} \label{rem:comparison}
    Note that substituting $\rho=\varepsilon$ into~\eqref{eq:clm-ber-ach} results in~\eqref{eq:Uniform-1|0}. Since~\eqref{eq:clm-ber-ach} is a minimisation over $\rho$, $\lambda_{1|0}^\adv$ is smaller for memoryless ingress than for fixed weight uniform ingress when $\lambda_{0\perp|1}$ is fixed. Similar argument holds for $\lambda_{0|1}^\adv$.
\end{rem}

%\begin{thm}
%    For the robust adversarial hypothesis testing with abstention problem between $P_0$ and $P_1$, when $Z^n\sim\mathrm{Unif}\{z^n\in\{0,1\}^n:\sum_{i=1}^nz_i=n\varepsilon\}$, a tuple $(\lambda_0,\lambda_1,\mu_0,\mu_1)$ belongs to the closure of the set of achievable exponents $\Lambda$ if and only if
    %\begin{align}
     %   B_{\mathrm{KL}}(P_0,\lambda_0)\cap B_{\mathrm{KL}}(P_1,\lambda_1)= \varnothing\\
      %  \mu_0\le(1-\varepsilon) \min_{\substack{q,v\in\Delta:\\ D((1-\varepsilon)q+\varepsilon v\|P_1)\le\lambda_1}} D(q\| P_0)\\
      %  \mu_1\le(1-\varepsilon) \min_{\substack{q,v\in\Delta:\\ D((1-\varepsilon)q+\varepsilon v\|P_0)\le\lambda_0}} D(q\| P_1)
    %\end{align}
%\end{thm}
% \newpage
% ..
% \newpage

The trade-off between the non-adversarial exponents follows from the same arguments as in Theorem~\ref{thm:Bernoulli}. We outline the proof for one of the adversarial exponents, namely, $\lambda^{\adv}_{1|0}$; the other follows similarly.
\paragraph*{Achievability} The detector is the same as before and takes $P_{Y^n}$ as input. We have
\begin{clm}\label{clm:Uniform-achievability}
$\lim_{n\rightarrow\infty}-\frac{1}{n}\log E^{\adv}_{1|0}(n)$ is at least
\begin{align*}
    \min_{\substack{q,u\in\Delta\ : D((1-\varepsilon)q + \varepsilon u\|P_1)\le\lambda_{0\perp|1}+\delta}} (1-\varepsilon) D(q\|P_0).
\end{align*}
\end{clm}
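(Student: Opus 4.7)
The plan is to reuse the type-based detector $\phi_n$ of~\eqref{eq:detector]} and follow the same template as Claim~\ref{clm:Bernoulli-achievability}, but to exploit the fact that $Z^n$ now has \emph{deterministic} Hamming weight $n\varepsilon_n:=\lceil n\varepsilon\rceil$. This removes the large-deviation cost associated to $Z^n$, so that the minimisation over $\rho$ that appeared in~\eqref{eq:clm-ber-ach} collapses to the single value $\rho=\varepsilon$.

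First I would condition on $Z^n$. Under $\cH_0$, the subvector $X^n|_{\overline{Z^n}}$ is i.i.d.\ $P_0$ of fixed length $n-n\varepsilon_n$, while the omniscient adversary picks $Y^n|_{Z^n}$ of length $n\varepsilon_n$ after seeing $(X^n,Z^n)$. Writing $q$ for the type of $X^n|_{\overline{Z^n}}$ and $u$ for the type of $Y^n|_{Z^n}$, the observed type is $P_{Y^n}=(1-\varepsilon_n)q+\varepsilon_n u$, and the adversary can force the ``declare $1$'' event iff $q$ lies in
\[\cF_n:=\bigl\{q\in\Delta_{n-n\varepsilon_n}:\exists\, u\in\Delta_{n\varepsilon_n},\ D\bigl((1-\varepsilon_n)q+\varepsilon_n u\|P_1\bigr)\le\lambda_{0\perp|1}+\delta\bigr\}.\]
Applying Sanov's theorem to the i.i.d.\ $P_0$ vector $X^n|_{\overline{Z^n}}$ (with a bound that is independent of the particular realisation of $Z^n$) gives
\[E^{\adv}_{1|0}(n)\le \mathrm{poly}(n)\cdot\exp\!\Bigl(-(1-\varepsilon_n)n\min_{q\in\cF_n}D(q\|P_0)\Bigr).\]

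The crux is the limiting argument. Taking $-\tfrac{1}{n}\log$, it remains to show that $\liminf_n (1-\varepsilon_n)\min_{q\in\cF_n}D(q\|P_0)\ge(1-\varepsilon)\min_{(q,u)\in\cF}D(q\|P_0)$, where $\cF:=\{(q,u)\in\Delta^2:D((1-\varepsilon)q+\varepsilon u\|P_1)\le\lambda_{0\perp|1}+\delta\}$. For each $n$, pick a minimiser $q_n\in\cF_n$ with adversarial witness $u_n\in\Delta_{n\varepsilon_n}$; by compactness of $\Delta(\cX)^2$, pass to a subsequence along which $(q_n,u_n)\to(q^*,u^*)$ and $\varepsilon_n\to\varepsilon$. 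Since $P_0,P_1$ have full support, both $q\mapsto D(q\|P_0)$ and $(q,u)\mapsto D((1-\varepsilon)q+\varepsilon u\|P_1)$ are continuous on $\Delta(\cX)^2$, so the constraint passes to the limit, giving $(q^*,u^*)\in\cF$, and $D(q_n\|P_0)\to D(q^*\|P_0)\ge\min_{\cF}D(q\|P_0)$. Sending $\delta\to 0$ at the end recovers the exponent stated.

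The main obstacle is this final compactness/continuity step: one must extract a \emph{common} convergent subsequence of $q_n$ together with its adversarial witness $u_n$, and simultaneously interchange $\varepsilon_n\to\varepsilon$ with the KL-constraint. The full-support assumption on $P_0,P_1$ is exactly what upgrades KL from merely lower semicontinuous to continuous on $\Delta(\cX)^2$ and makes this interchange valid. Everything else is a direct specialisation of the memoryless proof with the $Z^n$-large-deviation cost set to zero.
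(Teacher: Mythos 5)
Your proposal is correct and follows essentially the same route as the paper: condition on $Z^n=z^n$, observe that $X^n|_{\overline{z^n}}$ is i.i.d.\ $P_0$ of the appropriate length, upper-bound the probability of declaring $1$ by the probability that \emph{some} replacement type $u$ exists putting $(1-\varepsilon)q+\varepsilon u$ in the acceptance region, apply Sanov's theorem, and note the resulting bound is uniform over $z^n$ and over the adversary. The one genuine difference is your explicit treatment of the ceiling: the paper tacitly takes $n\varepsilon$ to be an integer and works directly with the $\varepsilon$-constrained set, whereas you work with $\varepsilon_n=\lceil n\varepsilon\rceil/n$ and then pass to the limit via compactness of $\Delta(\cX)^2$ and continuity of KL (which, as you correctly note, uses the full-support assumption). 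This is extra rigor rather than a different method; the paper instead handles the $\Delta_n\to\Delta$ approximation loosely inside the $o(1)$ term in the Sanov step. One cosmetic slip: the displayed claim already carries the $+\delta$ in the constraint, so your final ``sending $\delta\to 0$'' is not needed for this claim as stated (the $\delta\to 0$ step is a separate claim in the paper); this does not affect the validity of your argument.
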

An error occurs when the type of $X^n|_{\overline{Z^n}}$ (denoted $q$) is such that there exists a type $u$ (of length $n\varepsilon$) of the replacement samples such that the resulting type $P_{Y^n} = (1-\varepsilon)q+\varepsilon u$ falls in the decision region for $1$, namely, the set of $n$-length types in $B_{KL}(P_1,\lambda_{1\perp|0}+\delta)$ (see Figure~\ref{fig:fixed-weight}). The exponent is then the large deviation exponent for the occurrence of such a $q$ (when the $n(1-\varepsilon)$ samples are drawn i.i.d. $P_0$), and is given by minimum of $D(q\|P_0)$ over all occurrences of $q$ and all choices of $u$ that result in the error event.
\paragraph*{Converse} We consider an (oblivious) adversary $A$ who restricts itself to drawing the replacement samples i.i.d. from a fixed distribution $u$. %whenever $Z_k = 1$. 
For this restricted adversary, the distribution of $Y^n$ is invariant under permutations under both $\cH_0$ (in the presence of adversary) and $\cH_1$. Hence it suffices to consider (randomized) detectors that take $P_{Y^n}$ as input. We have
\begin{clm}
\label{clm:Uniform-converse}
If  $\liminf_{n\rightarrow\infty} -\frac{1}{n}\log E_{0\perp|1}(n) \ge \lambda$,
    \begin{align*}
        \liminf_{n\rightarrow\infty} &-\frac{1}{n} \log E^{A}_{1|0}(n) \\
        &\le{} (1-\varepsilon) \min_{\substack{q,u\in\Delta: D((1-\varepsilon)q + \varepsilon u\|P_1)\le\lambda}} D(q\|P_0).
    \end{align*}
\end{clm}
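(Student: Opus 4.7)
Since the oblivious adversary $A$ under consideration draws its replacements i.i.d.\ from a fixed distribution $u$ and $Z^n$ is uniform over weight-$\lceil n\varepsilon\rceil$ binary strings independent of $X^n$, the distribution of $Y^n$ is permutation invariant both under $\cH_0$ (with $A$) and under $\cH_1$. Without loss of generality, the randomized test depends on $Y^n$ only through $P_{Y^n}$, and I denote by $\phi_n^{(1)}(p)$ its probability of declaring $1$ on observing type $p\in\Delta_n$. The approach is to translate the non-adversarial type-2 exponent constraint into a geometric condition on the acceptance region, then exhibit a specific $u$ whose induced type concentrates on a point inside that region at the desired rate.

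First I would use the method of types to extract the geometric condition. For any $n$-type $p_n$,
\[
\mathbb{P}_{X^n\sim P_1^n}(P_{X^n}=p_n)\ge (n+1)^{-|\cX|}\exp\!\big(-nD(p_n\|P_1)\big),
\]
and $E_{0\perp|1}(n)\ge \bigl(1-\phi_n^{(1)}(p_n)\bigr)\,\mathbb{P}_{P_1^n}(P_{X^n}=p_n)$. Hence, if $\liminf_n -\tfrac{1}{n}\log E_{0\perp|1}(n)\ge\lambda$, then for every $p\in\Delta$ with $D(p\|P_1)<\lambda$ and every sequence of $n$-types $p_n\to p$, one must have $\phi_n^{(1)}(p_n)\to 1$ (else the liminf along a subsequence would be at most $D(p\|P_1)<\lambda$).

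Next I would construct the witness adversary. Let $(q^*,u^*)$ attain the RHS minimum and set $p^*=(1-\varepsilon)q^*+\varepsilon u^*$. For arbitrary $\delta>0$, using joint convexity of KL divergence and the full-support assumption on $P_1$, the perturbation $(q',u'):=(1-t)(q^*,u^*)+t(P_1,P_1)$ satisfies, for small enough $t>0$, $D((1-\varepsilon)q'+\varepsilon u'\|P_1)<\lambda$ strictly while $(1-\varepsilon)D(q'\|P_0)\le (1-\varepsilon)D(q^*\|P_0)+\delta$. Take $A$ to draw replacements i.i.d.\ from $u'$, and choose sequences of types $q_n$ (of length $n-\lceil n\varepsilon\rceil$) and $v_n$ (of length $\lceil n\varepsilon\rceil$) with $q_n\to q'$ and $v_n\to u'$. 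By the method of types, the event that the uncorrupted samples $X^n|_{\overline{Z^n}}$ have type $q_n$ and the replacements have type $v_n$ has probability at least $\exp\!\big(-n(1-\varepsilon)D(q_n\|P_0)-n\varepsilon D(v_n\|u')-o(n)\big)$; on this event, $P_{Y^n}=(1-\varepsilon)q_n+\varepsilon v_n$ converges to a point strictly inside $B_{\mathrm{KL}}(P_1,\lambda)$, so by the previous step $\phi_n^{(1)}(P_{Y^n})\to 1$. Choosing the $v_n$ so that $D(v_n\|u')\to 0$ then yields
\[
\liminf_{n\to\infty} -\tfrac{1}{n}\log E^A_{1|0}(n)\le (1-\varepsilon)D(q'\|P_0)\le (1-\varepsilon)D(q^*\|P_0)+\delta,
\]
and sending $\delta\to 0$ gives the claim.

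The main obstacle I anticipate is the perturbation step: verifying that the minimizer can be approximated by a strictly feasible $(q',u')$ with controlled objective, and that the length-$\lceil n\varepsilon\rceil$ and length-$(n-\lceil n\varepsilon\rceil)$ rational-type approximants can be chosen so that both $D(q_n\|P_0)\to D(q'\|P_0)$ and $D(v_n\|u')\to 0$. The full-support assumption on $P_0,P_1$ keeps all relevant divergences finite and continuous, so this should reduce to careful but routine bookkeeping.
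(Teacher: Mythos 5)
Your proposal is correct and follows essentially the same route as the paper: both reduce to permutation-invariant tests via the oblivious i.i.d.~adversary, both translate the constraint on $E_{0\perp|1}$ into the detector accepting (with probability $\to1$) on types approaching the interior of $B_{\mathrm{KL}}(P_1,\lambda)$, both decompose the type of $Y^n$ into the uncorrupted and replacement sub-types and invoke the type-class lower bound to get the exponent $(1-\varepsilon)D(q\|P_0)+\varepsilon D(v\|u)$, and both kill the second term by matching $u$ to the optimal replacement type. The only cosmetic difference is how the boundary is handled: the paper works with the $2\delta$-shrunk ball and then proves a separate continuity claim in $\delta$, while you perturb the minimizer $(q^*,u^*)$ toward $(P_1,P_1)$ to obtain a strictly feasible witness and send the perturbation to zero at the end; these accomplish the same thing.
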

Let $g_n(q),q\in\Delta_n$ be the probability that the detector declares $1$ when $P_{Y^n}=q$. Fix $\delta > 0$ and let $g_n$ satisfy the hypothesis in the claim. Then, there exists $N$ such that for all $n > N$, we have
\begin{align}
-\frac{1}{n} \log E_{0\perp|1}(n) \ge \lambda - \delta.
%\label{eq:lamUnif}
\end{align}
Then, by Sanov's theorem, we have for $n > N$,
\begin{align}
\max_{q\in B_{KL}(P_1,{\lambda-2\delta})\cap\Delta_n} \left(1-{g_n}(q)\right)\le \exp({-n\delta/2}).
\label{eq:gBound}
\end{align}
Then, the error probability $E^A_{1|0}(n)$, by~\eqref{eq:gBound}, is at least
\[
(1-\exp({-n\delta/2}))\sum_{q\in B_{KL}(P_1,{\lambda-2\delta})\cap\Delta_n} \mathbb{P}_{\cH_0} [P_{Y^n}=q].
\]
Conditioned on $Z^n = z^n$, the event in the summand is the intersection of $\left\{P_{Y^n|_{z^n}}=q_1\right\}$ and $\left\{P_{Y^n|_{\overline{z^n}}}=q_2\right\}$ where $(1-\varepsilon)q_1+\varepsilon q_2 = q$. The error exponent is then given by Sanov's theorem applied to the corresponding types, namely,
\[
\min_{\substack{q_1,q_2\in\Delta: (1-\varepsilon)q_1 + \varepsilon q_2 \in B_{KL}(P_1,{\lambda-2\delta})}} (1-\varepsilon) D(q_1\|P_0)+\varepsilon D(q_2\|u).
\]
The proof is completed by choosing $q_2 = u$ and letting $\delta \to 0$.
\section{Strong Contamination}\label{sec:advchoose}
\begin{figure}[H]
    \centering
    \begin{tikzpicture}
        \filldraw[radius=1.2, color=red!60, fill=red!5, very thick](0,0) circle;
        \fill (0,0) circle[radius=0.5mm];
        \draw[->] (0,0) node[above left]{$P_0$}--node[above left, font=\Large, scale=.6]{$\lambda_{1\perp|0}$}({1.2*cos(60)},{1.2*sin(60)}) node[color=red!100, font=\scriptsize, above right]{$B_{\mathrm{KL}}(P_0,\lambda_{1\perp|0})$};
        \draw (0,-.8) node[color = red!40]{declare 0};

        \filldraw[radius=1.2, color=blue!60, fill=blue!5, very thick](6,0) circle;
        \fill (6,0) circle[radius=0.5mm];
        \draw[->] (6,0) node[above right]{$P_1$} -- node[above right, font=\Large, scale=.6]{$\lambda_{0\perp|1}$}({6+1.2*cos(110)},{1.2*sin(110)}) node[color=blue!100, font=\scriptsize, above left]{$B_{\mathrm{KL}}(P_1,\lambda_{0\perp|1})$};
        \draw (6,-.8) node[color=blue!40]{declare 1};

        \draw[thick, rotate around={45:(4.8,0)}, pattern={north east lines}, pattern color=black!50] (4.3,-0.5) rectangle (5.3,0.5);
        \fill (4.8,0) circle[radius=0.5mm];
        \draw (4.8,0) node[above left]{$p^*$};
        \fill ({4.8-sqrt(.5)},0) circle[radius=0.5mm];
        \draw[->] (0,0) -- node[above, font=\Large, scale=.6]{$\lambda^{\adv}_{1|0}$} ({4.8-sqrt(.5)},0) node[above left]{$q^*$};
        \draw ({4.9-0.5*sqrt(0.5)},{.1-0.5*sqrt(0.5)}) node[font=\scriptsize, below left]{$B_{\mathrm{TV}}(p^*,\varepsilon)$};
        \draw (2.6,-.8) node[color=black!40]{abstain};
        \end{tikzpicture}
    \caption{Strong contamination: $p^*,q^*$ are minimisers of (\ref{eq:SC|0}). The black rhombus represents $B_{\mathrm{TV}}(p^*,\varepsilon)$, which is the set of types $q$ of the original samples $X^n$, which an adversary can attack to induce $P_{Y^n}=p^*$ (for which the detector declares 1).
    The probability of $P_{X^n}=q^*$ under $\cH_0$ is $\approx\exp(-nD(q^*\|P_0))$.}
    \label{fig:strong}
\end{figure}
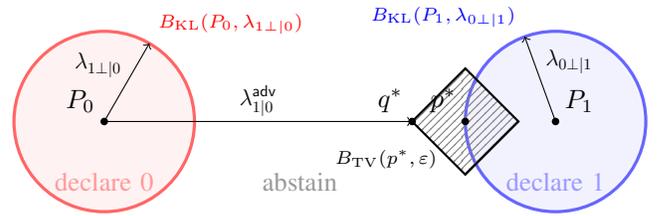

Now consider the setup where the adversary (when present) can choose the attack locations $Z^n$, subject to $\mathrm{wt}(Z^n)\le n\varepsilon$.
Formally, $\cA_{\varepsilon}$ are all $A:\cX^n\rightarrow\cX^n$ such that, for all $x^n\in\cX^n$, $x^n$ and $A(x^n)$ differ in at most $n\varepsilon$ locations.
We denote the trade-off region for this case by $\Lambda^{\mathsf{SC(\varepsilon)}}$.

%\todo[inline]{Malhar, why not $\cA_{\varepsilon}$ are all $A:\cX^n\rightarrow\cX^n$ such that, for all $x^n\in\cX^n$, the number of locations where $x^n$ and $A(x^n)$ differ is at most $n\varepsilon$?}

\begin{thm}
\label{thm:SC}
    $\Lambda^{\mathsf{SC}(\varepsilon)}$ is the closure of all quadruples $(\lambda_{1\perp|0},\lambda_{0\perp|1},\lambda^{\adv}_{1|0},\lambda^{\adv}_{0|1})$ such that
    \begin{align}
        B_{\mathrm{KL}}(P_0,\lambda_{1\perp|0})\cap B_{\mathrm{KL}}(P_1,\lambda_{0\perp|1})= \varnothing \label{eq:SC-nonadv}\\
        \lambda^{\adv}_{1|0} \le\min_{\substack{p,q\in\Delta:\, d_{TV}(q,p)\le\varepsilon,\, D(p\|P_1)\le\lambda_{0\perp|1}}} D(q\| P_0)\label{eq:SC|0}\\
        \lambda^{\adv}_{0|1}\le \min_{\substack{p,q\in\Delta:\, d_{TV}(q,p)\le\varepsilon,\, D(p\|P_0)\le\lambda_{1\perp|0}}} D(q\| P_1)\label{eq:SC|1}
    \end{align}
\end{thm}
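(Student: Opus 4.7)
The plan is to establish the bound on $\lambda^{\adv}_{1|0}$ via matching achievability and converse; the bound on $\lambda^{\adv}_{0|1}$ follows by symmetry, and~\eqref{eq:SC-nonadv} is immediate from~\eqref{eq:non-adv-tradeoff}. The structure parallels Theorem~\ref{thm:Uniform}, with the essential change that the set of types $P_{Y^n}$ inducible by the adversary from $X^n$ is now the TV-ball $B_{\mathrm{TV}}(P_{X^n},\varepsilon)\cap\Delta_n$ rather than a convex-combination slice parameterised by a fixed mixing weight $\varepsilon$.

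For achievability I would use the detector in~\eqref{eq:detector]}. The non-adversarial exponents follow from Sanov's theorem exactly as in Theorem~\ref{thm:Bernoulli}. For $\lambda^{\adv}_{1|0}$ the key observation is that modifying at most $n\varepsilon$ coordinates changes the type by at most $\varepsilon$ in total variation, so under $\cH_0$ the error event $\phi_n(Y^n)=1$ forces $P_{X^n}$ to be an $n$-type $q$ admitting some $p\in\Delta_n$ with $d_{\mathrm{TV}}(p,q)\le\varepsilon$ and $D(p\|P_1)\le\lambda_{0\perp|1}+\delta$. Sanov together with a union bound over the $\mathrm{poly}(n)$ offending $q$'s yields exponent $\min D(q\|P_0)$ over this set, and letting $\delta\to 0$ recovers~\eqref{eq:SC|0} by continuity.

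For the converse, fix any $(q,p)\in\Delta^2$ with $d_{\mathrm{TV}}(q,p)<\varepsilon$ and $D(p\|P_1)<\lambda_{0\perp|1}$, and select $n$-types $q_n\to q$, $p_n\to p$ with $d_{\mathrm{TV}}(q_n,p_n)\le\varepsilon$. Consider the randomised adversary $A$ which, given $X^n$ of type $q_n$, outputs a uniformly random $y^n$ that has type $p_n$ and differs from $X^n$ in at most $n\varepsilon$ coordinates (the set of such $y^n$ is nonempty by the TV bound), and acts as identity on other inputs. Permutation-equivariance of this map, combined with permutation-invariance of $P_0^n$ restricted to the type class of $q_n$, makes the conditional distribution of $Y^n$ given $\{P_{X^n}=q_n\}$ uniform over $S_{p_n}:=\{y^n:P_{y^n}=p_n\}$. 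Writing $g(y^n):=\Pr[\phi_n(y^n)=1]$, the assumed exponent on $\lambda_{0\perp|1}$ together with the type-counting lower bound $\mathbb{P}_{P_1^n}(P_{X^n}=p_n)\ge\exp(-n(D(p\|P_1)+o(1)))/\mathrm{poly}(n)$ forces $|S_{p_n}|^{-1}\sum_{y^n\in S_{p_n}}g(y^n)\to 1$, mirroring the proof of Claim~\ref{clm:Uniform-converse}. Hence
\begin{align*}
E^{A}_{1|0}(n)\ge (1-o(1))\,\mathbb{P}_{P_0^n}(P_{X^n}=q_n)\ge\exp(-n(D(q\|P_0)+o(1))),
\end{align*}
giving $\lambda^{\adv}_{1|0}\le D(q\|P_0)$. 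Minimising over admissible $(q,p)$ and passing to the closure yields~\eqref{eq:SC|0}.

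The main obstacle is the symmetrisation argument in the converse: I must route each $X^n$ in the type class of $q_n$ to a $Y^n$ of type $p_n$ within the Hamming budget while ensuring the \emph{marginal} of $Y^n$ is uniform on $S_{p_n}$, so that the non-adversarial exponent constraint on $\lambda_{0\perp|1}$ transfers to high detector confidence on (almost every) specific $Y^n$ encountered rather than merely in an averaged sense. Managing the $O(1/n)$ rounding slack in passing from $(q,p)\in\Delta^2$ to length-$n$ types $(q_n,p_n)$ with $d_{\mathrm{TV}}(q_n,p_n)\le\varepsilon$, and the final continuity/closure step, are minor technicalities handled in the usual way.
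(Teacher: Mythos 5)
Your proposal is correct and follows essentially the same overall structure as the paper: identical detector and Sanov argument for achievability, and for the converse a randomized, permutation-invariant adversary targeting a near-optimal $(p^*,q^*)$, combined with the $\lambda_{0\perp|1}$ constraint forcing $g(p_n)\to1$ on the target type and a Sanov lower bound on $\mathbb{P}_{\mathcal{H}_0}[P_{X^n}\approx q^*]$. The one substantive variation is in how permutation invariance of $Y^n$ is achieved: the paper uses an explicit coordinate-by-coordinate randomized swap procedure (which makes the Hamming budget verification transparent), whereas you sample $Y^n$ uniformly from $\{y^n:P_{y^n}=p_n,\ d_H(x^n,y^n)\le n\varepsilon\}$ and invoke permutation-equivariance of this map together with exchangeability of $P_0^n$ within a type class -- a cleaner route to the same conclusion, and the nonemptiness of the sampling set is exactly the TV-budget condition you cite.
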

Notice again that the trade-off between the non-adversarial exponents is the same as before, and there is no further trade-off between the adversarial exponents.
\paragraph*{Achievability}
Again, we use the same detector as in the previous sections (\ref{eq:detector]}). For the same reasons as before, (\ref{eq:non-adv|0}) and (\ref{eq:non-adv|1}) are satisfied. The following claim gives (\ref{eq:adv|0}); similarly, (\ref{eq:adv|1}) follows.
\begin{clm}
    \label{clm:adv-choice-achievability}
    %For the detector in (\ref{eq:detector]})  
    $\liminf_{n\rightarrow\infty}-\frac{1}{n}\log E_{1|0}^{\adv}(n)$ is at least
    \[ \min_{\substack{p,q\in\Delta:\, d_{TV}(q,p)\le\varepsilon,\, D(p\|P_1)\le\lambda_{0\perp|1}+\delta}} D(q\|P_0).\]
\end{clm}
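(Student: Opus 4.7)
The plan is a direct method-of-types argument that reduces the adversarial error to a Sanov-type large deviation estimate for the unperturbed type $P_{X^n}$. The crucial geometric observation is that the adversary's Hamming budget of at most $n\varepsilon$ position changes translates into a total-variation constraint on the induced types: if $Y^n$ and $X^n$ agree on at least $n(1-\varepsilon)$ coordinates, then $d_{\mathrm{TV}}(P_{X^n},P_{Y^n})\leq \varepsilon$, since each changed position contributes at most $2/n$ to $\sum_x |P_{X^n}(x)-P_{Y^n}(x)|$. This bound holds uniformly for every $A\in\cA_\varepsilon$ and every realization of $X^n$.

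With this in hand, the detector declares $1$ on input $Y^n$ precisely when $D(P_{Y^n}\|P_1)\leq \lambda_{0\perp|1}+\delta$. So if the adversary succeeds in forcing a wrong declaration, the type $q=P_{X^n}$ must lie in
$$S_n := \{q\in\Delta_n : \exists p\in\Delta_n,\, d_{\mathrm{TV}}(q,p)\leq\varepsilon,\, D(p\|P_1)\leq \lambda_{0\perp|1}+\delta\}.$$
Therefore, uniformly over $A\in\cA_\varepsilon$, $E^{A}_{1|0}(n) \leq \mathbb{P}_{X^n\sim P_0^n}[P_{X^n}\in S_n]$, and by the standard type-counting bounds $\mathbb{P}[P_{X^n}=q]\leq \exp(-nD(q\|P_0))$ and $|\Delta_n|\leq (n+1)^{|\cX|}$, I obtain
$$E^{\adv}_{1|0}(n) \leq (n+1)^{|\cX|}\exp\left(-n\min_{q\in S_n}D(q\|P_0)\right).$$
Since $S_n$ is contained in the analogous continuous set $\{q\in\Delta : \exists p\in\Delta,\, d_{\mathrm{TV}}(q,p)\leq\varepsilon,\, D(p\|P_1)\leq\lambda_{0\perp|1}+\delta\}$, the discrete minimum is at least the continuous minimum; taking $\liminf$ as $n\to\infty$ absorbs the polynomial prefactor and yields the claimed exponent.

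The only step that requires any care is the TV-bookkeeping that converts the Hamming constraint into a total-variation constraint between types, and the verification that this uniform bound over $A$ is enough to control the supremum $E^{\adv}_{1|0}$. Once these are in hand, the remainder is a textbook Sanov estimate of the same flavour as the achievability arguments for Theorems~\ref{thm:Bernoulli} and~\ref{thm:Uniform}, and I do not anticipate any substantive technical obstacle; in particular, no large-deviation analysis of $\mathrm{wt}(Z^n)$ is needed here, because the adversary's budget $n\varepsilon$ is deterministic rather than random as in the two previous models.
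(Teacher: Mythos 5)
Your proof is correct and takes essentially the same route as the paper's: you observe that the adversary's Hamming budget of $n\varepsilon$ replacements forces $d_{\mathrm{TV}}(P_{X^n},P_{Y^n})\le\varepsilon$, so any successful attack requires $P_{X^n}$ to fall in the set $S_n$ you define, and then a uniform-over-$A$ Sanov bound gives the exponent. The paper performs exactly this chain of inequalities (bounding $E_{1|0}^A$ by $\mathbb{P}_{\cH_0}[\exists p\in\Delta_n:\ d_{\mathrm{TV}}(p,P_{X^n})\le\varepsilon,\ D(p\|P_1)\le\lambda+\delta]$ and invoking Sanov), so there is nothing substantively different to compare.
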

Note that, in this model, whenever the adversary modifies a sequence $X^n$ to $Y^n$, we must have $d_{\mathrm{TV}}(P_{X^n},P_{Y^n})\le\varepsilon$.
We can then interpret Claim \ref{clm:adv-choice-achievability} as follows (see Figure \ref{fig:strong}).
Suppose there exist distributions $p,q$ such that (a) $p\in B_{\mathrm{KL}}(P_1,\lambda_{0\perp|1}+\delta)$ (i.e. the detector declares 1 when $P_{Y^n}=p$), and (b) $d_{\mathrm{TV}}(p,q)\le \varepsilon$, then whenever $P_{X^n}=q$, the adversary carries out a successful attack, contributing to $E^{\adv}_{1|0}$.
Claim \ref{clm:adv-choice-achievability} says that this (optimised over $p,q$) is the dominant avenue for the adversary's attack.
\paragraph*{Converse}
To show the converse, as before, it suffices to show (\ref{eq:SC|0}).
Consider the following (randomized) adversary $A$:
it first obtains a minimiser $(p^*,q^*)$ of the following problem (for some $\delta>0$).
\begin{align*}
    \min_{\substack{p,q\in\Delta:\, d_{TV}(q,p)\le\varepsilon-\delta,\, D(p\|P_1)\le\lambda_{0\perp|1}}-\delta} D(q\|P_0)
\end{align*}
Then, any input $x^n$ with $P_{x^n}$ close to $q^*$ may be changed to a $y^n$ of type $p_n^*$ with $p^*_n\in\Delta_n$ close to $p^*$ with the $\delta$-back-off ensuring that this can be done within the $n\varepsilon$ budget.
The adversary changes $X^n$ of type $P_{X^n}$ into a $Y^n$ of type $p_n^*$, with the replacement locations and samples chosen at random such that the induced distribution on $Y^n$ is invariant under permutations (see Appendix \ref{sec:SC-proofs} for details).
Therefore, it is enough to consider (randomized) detectors which take $P_{Y^n}$ as input.
Now, similarly to the previous section, $\liminf_{n\rightarrow\infty}-\frac{1}{n}\log E_{0\perp|1}(n) \ge \lambda_{0\perp|1} \implies g_n(p_n^*)\ge 1-\exp({-n\delta/4})$.
At the same time
\begin{align*}
    \mathbb{P}_{\cH_0}[P_{Y^n}=p_n^*]\ge \mathbb{P}_{\cH_0}[P_{X^n}\in B_{\mathrm{KL}}(q^*,\eta)]
\end{align*}
for $\eta>0$ small enough.
The probability on the right-side, by Sanov's theorem, is at least $\exp(-n(D(q^*\|P_0)-o(1)))$.
Therefore,
\begin{align*}
    E^A_{1|0}(n)\ge{}& g_n(p_n^*)\mathbb{P}_{\cH_0}[P_{Y^n}=p_n^*]\\
    \ge {}&\left(1-\exp({-n\delta/4})\right) \exp({-n(D(q^*\|P_0)-o(1)))}.
\end{align*}
Letting $\delta\to 0$ completes the proof.

%\section{Proofs}
%\label{sec:proofs}
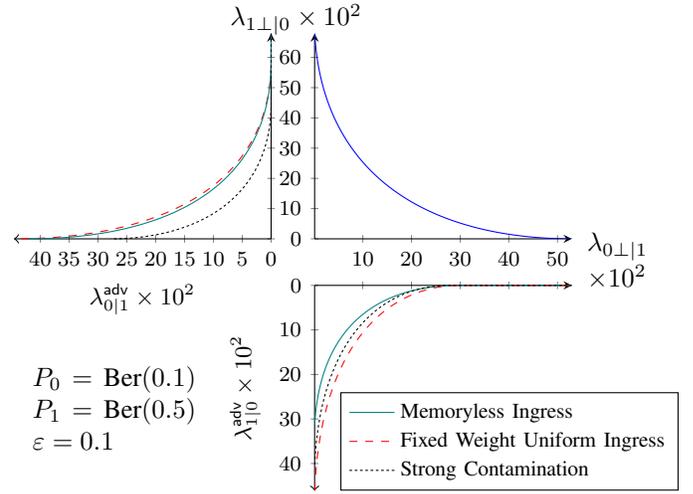
\begin{figure}[h]
    \centering
    \begin{tikzpicture}
        \pgfplotsset{footnotesize}
        %\matrix{
        \begin{axis}[
        name= adv01,
        xlabel= {$\lambda_{0|1}^{\adv}\times10^2$},
        ylabel = {},
        xticklabel = {\pgfmathparse{-100*\tick}$\pgfmathprintnumber{\pgfmathresult}$},
        yticklabel=\empty,
        axis x line = bottom,
        axis y line = right,
        x axis line style={<-},
        ]
        \addplot[red,dashed] table [
        x=La01fw,
        y=L10,
        mark=none,
        ]
        {plot_data.csv};
        \addplot[teal] table [
        x=La01ber,
        y=L10,
        mark=none,
        ]
        {plot_data.csv};
        \addplot[black, dash pattern = on 1pt off 1pt] table [
        x=La01adv,
        y=L10,
        mark=none,
        ]
        {plot_data.csv};
        \end{axis}
        %&
        \begin{axis}[
        xshift=4cm,
        yticklabel = {\pgfmathparse{100*\tick}$\pgfmathprintnumber{\pgfmathresult}$},
        xticklabel = {\pgfmathparse{100*\tick}$\pgfmathprintnumber{\pgfmathresult}$},
        axis x line = bottom,
        axis y line = left,
        ]
        \addplot[blue] table [
        x=L01,
        y=L10,
        mark=none,
        ]
        {plot_data.csv};
        \end{axis}
        %\\
        %text
        %&
        \begin{axis}[
        yshift=-3.35cm, xshift=4cm,
        xlabel= {},
        ylabel = {$\lambda_{1|0}^{\adv}\times10^2$},
        xticklabel=\empty,
        yticklabel = {\pgfmathparse{-100*\tick}$\pgfmathprintnumber{\pgfmathresult}$},
        axis x line = top,
        axis y line = left,
        y axis line style={<-},
        legend style = {at={(.1,0)}, anchor = south west},
        legend cell align = {left},
        ]
        \addplot[teal] table [
        x=L01,
        y=La10ber,
        mark=none,
        ]
        {plot_data.csv};
        \addlegendentry{Memoryless Ingress}
        \addplot[red, dashed] table [
        x=L01,
        y=La10fw,
        mark=none,
        ]
        {plot_data.csv};
        \addlegendentry{Fixed Weight Uniform Ingress}
        \addplot[black, dash pattern = on 1pt off 1pt] table [
        x=L01,
        y=La10adv,
        mark=none,
        ]
        {plot_data.csv};
        \addlegendentry{Strong Contamination}
        \end{axis}
        %\\
        %};
    \draw (7.5,-0.1) node[anchor=west] {$\lambda_{0\bot|1}$};
    \draw (7.5,-0.5) node[anchor=west] {$\times 10^2$};
    \draw (3.75,3.25) node[anchor=north] {$\lambda_{1\bot|0}\times 10^2$};
    \node[text width=2.5cm] at (1.5,-2.25) {
    $P_0=\text{Ber}(0.1)$ 
    $P_1=\text{Ber}(0.5)$
    $\varepsilon=0.1$};
    \end{tikzpicture}
    \caption{Example of the achievability region. Here, $\cX=\{0,1\}$. The top right plot gives the trade-off between the two non-adversarial exponents (note that it is the same for all three adversarial models). For any point $(\lambda_{0\perp|1}, \lambda_{1\perp|0})$ under the blue curve, we can find points $(\lambda_{0|1}^\adv,\lambda_{1\perp|0})$ and $(\lambda_{0\perp|1},\lambda_{1|0}^\adv)$ on the curves in the top left and bottom right plots respectively. These give a quadruple on the boundary of the achievable region.}
    \label{fig:plot1}
\end{figure}

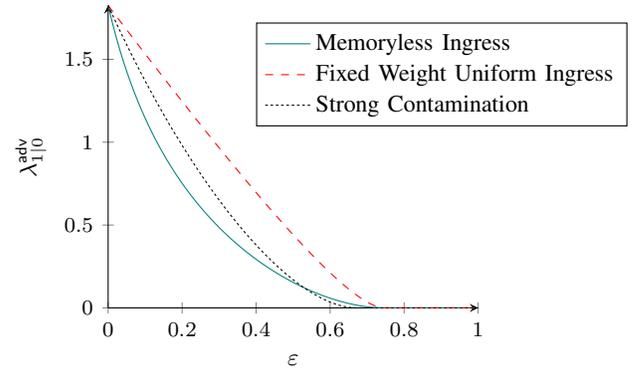
\begin{figure}[h]
    \centering
    \begin{tikzpicture}
        \pgfplotsset{small}
        \begin{axis}
        [
        axis x line = bottom,
        axis y line = left,
        xlabel = {$\varepsilon$},
        ylabel = {$\lambda_{1|0}^{\adv}$},
  %      legend pos = north east,
        legend style={font=\small, at = {(0.4,0.6)}, anchor=south west},
        legend cell align = {left},
        ]
            \addplot[teal] table [
            x=eps,
            y=La10ber,
            mark=none,
            ]
            {plot2_data.csv};
            \addlegendentry{Memoryless Ingress}
            \addplot[red,dashed] table [
            x=eps,
            y=La10fw,
            mark=none,
            ]
            {plot2_data.csv};
            \addlegendentry{Fixed Weight Uniform Ingress}
            \addplot[black, dash pattern = on 1pt off 1pt] table [
            x=eps,
            y=La10adv,
            mark=none,
            ]
            {plot2_data.csv};
            \addlegendentry{Strong Contamination}
        \end{axis}
    \end{tikzpicture}
    \caption{Plot of $\lambda^{\adv}_{1|0}$ against $\varepsilon$ for fixed $\lambda_{0\perp|1}=0.1$. Here, $P_0 = \mathrm{Ber}(0.1)$ and $P_1 = \mathrm{Ber}(0.9)$. The exponent for Strong Contamination is smaller than that for Fixed Weight Uniform Ingress by the problem definition. Remark \ref{rem:comparison} explains why the exponent for Memoryless Ingress is smaller than that for Fixed Weight Uniform Ingress. Strong Contamination is not comparable with Memoryless Ingress.}
    \label{fig:plot2}
    \vspace{-0.3cm}
\end{figure}

Figure~\ref{fig:plot1} shows the optimal trade-off between the exponents and Figure~\ref{fig:plot2} shows the dependence of $\lambda^{\adv}_{1|0}$ on the corruption level $\varepsilon$, both for the binary alphabet. 
\appendices

%\clearpage
\bibliographystyle{IEEEtran}
%\bibliography{biblio}

\begin{thebibliography}{10}
\providecommand{\url}[1]{#1}
\csname url@samestyle\endcsname
\providecommand{\newblock}{\relax}
\providecommand{\bibinfo}[2]{#2}
\providecommand{\BIBentrySTDinterwordspacing}{\spaceskip=0pt\relax}
\providecommand{\BIBentryALTinterwordstretchfactor}{4}
\providecommand{\BIBentryALTinterwordspacing}{\spaceskip=\fontdimen2\font plus
\BIBentryALTinterwordstretchfactor\fontdimen3\font minus \fontdimen4\font\relax}
\providecommand{\BIBforeignlanguage}[2]{{%
\expandafter\ifx\csname l@#1\endcsname\relax
\typeout{** WARNING: IEEEtran.bst: No hyphenation pattern has been}%
\typeout{** loaded for the language `#1'. Using the pattern for}%
\typeout{** the default language instead.}%
\else
\language=\csname l@#1\endcsname
\fi
#2}}
\providecommand{\BIBdecl}{\relax}
\BIBdecl

\bibitem{chow1957optimum}
C.-K. Chow, ``An optimum character recognition system using decision functions,'' \emph{IRE Transactions on Electronic Computers}, no.~4, pp. 247--254, 1957.

\bibitem{chow1970optimum}
C.~Chow, ``On optimum recognition error and reject tradeoff,'' \emph{IEEE Transactions on information theory}, vol.~16, no.~1, pp. 41--46, 1970.

\bibitem{herbei2006classification}
R.~Herbei and M.~H. Wegkamp, ``Classification with reject option,'' \emph{The Canadian Journal of Statistics/La Revue Canadienne de Statistique}, pp. 709--721, 2006.

\bibitem{bartlett2008classification}
P.~L. Bartlett and M.~H. Wegkamp, ``Classification with a reject option using a hinge loss.'' \emph{Journal of Machine Learning Research}, vol.~9, no.~8, 2008.

\bibitem{yuan2010classification}
M.~Yuan and M.~Wegkamp, ``Classification methods with reject option based on convex risk minimization.'' \emph{Journal of Machine Learning Research}, vol.~11, no.~1, 2010.

\bibitem{wegkamp2007lasso}
M.~Wegkamp, ``Lasso type classifiers with a reject option,'' \emph{Electronic Journal of Statistics}, vol.~1, pp. 155--168, 2007.

\bibitem{grandvalet2008support}
Y.~Grandvalet, A.~Rakotomamonjy, J.~Keshet, and S.~Canu, ``Support vector machines with a reject option,'' \emph{Advances in neural information processing systems}, vol.~21, 2008.

\bibitem{cortes2016learning}
C.~Cortes, G.~DeSalvo, and M.~Mohri, ``Learning with rejection,'' in \emph{Algorithmic Learning Theory: 27th International Conference, ALT 2016, Bari, Italy, October 19-21, 2016, Proceedings 27}.\hskip 1em plus 0.5em minus 0.4em\relax Springer, 2016, pp. 67--82.

\bibitem{shekhar2021active}
S.~Shekhar, M.~Ghavamzadeh, and T.~Javidi, ``Active learning for classification with abstention,'' \emph{IEEE Journal on Selected Areas in Information Theory}, vol.~2, no.~2, pp. 705--719, 2021.

\bibitem{zhu2022efficient}
Y.~Zhu and R.~Nowak, ``Efficient active learning with abstention,'' \emph{Advances in Neural Information Processing Systems}, vol.~35, pp. 35\,379--35\,391, 2022.

\bibitem{denis2024active}
C.~Denis, M.~Hebiri, B.~Ndjia~Njike, and X.~Siebert, ``Active learning algorithm through the lens of rejection arguments,'' \emph{Machine Learning}, vol. 113, no.~2, pp. 753--788, 2024.

\bibitem{grigoryan2011multiple}
N.~Grigoryan, A.~Harutyunyan, S.~Voloshynovskiy, and O.~Koval, ``On multiple hypothesis testing with rejection option,'' in \emph{2011 IEEE Information Theory Workshop}.\hskip 1em plus 0.5em minus 0.4em\relax IEEE, 2011, pp. 75--79.

\bibitem{sason2012moderate}
I.~Sason, ``Moderate deviations analysis of binary hypothesis testing,'' in \emph{2012 IEEE International Symposium on Information Theory Proceedings}.\hskip 1em plus 0.5em minus 0.4em\relax IEEE, 2012, pp. 821--825.

\bibitem{lalitha2016reliability}
A.~Lalitha and T.~Javidi, ``Reliability of sequential hypothesis testing can be achieved by an almost-fixed-length test,'' in \emph{2016 IEEE International Symposium on Information Theory (ISIT)}.\hskip 1em plus 0.5em minus 0.4em\relax IEEE, 2016, pp. 1710--1714.

\bibitem{tukey1960survey}
J.~W. Tukey, ``A survey of sampling from contaminated distributions,'' \emph{Contributions to probability and statistics: Essays in Honor of Harold Hotelling}, pp. 448--485, 1960.

\bibitem{huber1964robust}
P.~J. Huber, ``Robust estimation of a location parameter,'' \emph{Ann. Math. Statist.}, vol.~35, no.~4, pp. 73--101, 1964.

\bibitem{huber1965robust}
------, ``A robust version of the probability ratio test,'' \emph{Ann. Math. Statist.}, pp. 1753--1758, 1965.

\bibitem{levy2009robust}
B.~C. Levy, ``Robust hypothesis testing with a relative entropy tolerance,'' \emph{IEEE Transactions on Information Theory}, vol.~55, no.~1, pp. 413--421, 2009.

\bibitem{lai2016agnostic}
K.~A. Lai, A.~B. Rao, and S.~Vempala, ``Agnostic estimation of mean and covariance,'' in \emph{2016 IEEE 57th Annual Symposium on Foundations of Computer Science (FOCS)}.\hskip 1em plus 0.5em minus 0.4em\relax IEEE, 2016, pp. 665--674.

\bibitem{charikar2017learning}
M.~Charikar, J.~Steinhardt, and G.~Valiant, ``Learning from untrusted data,'' in \emph{Proceedings of the 49th Annual ACM SIGACT Symposium on Theory of Computing}, 2017, pp. 47--60.

\bibitem{qiao2018learning}
M.~Qiao and G.~Valiant, ``Learning discrete distributions from untrusted batches,'' in \emph{9th Innovations in Theoretical Computer Science Conference (ITCS 2018)}.\hskip 1em plus 0.5em minus 0.4em\relax Schloss Dagstuhl-Leibniz-Zentrum fuer Informatik, 2018.

\bibitem{diakonikolas2019robust}
I.~Diakonikolas, G.~Kamath, D.~Kane, J.~Li, A.~Moitra, and A.~Stewart, ``Robust estimators in high-dimensions without the computational intractability,'' \emph{SIAM Journal on Computing}, vol.~48, no.~2, pp. 742--864, 2019.

\bibitem{chen2020learning}
S.~Chen, J.~Li, and A.~Moitra, ``Learning structured distributions from untrusted batches: Faster and simpler,'' \emph{Advances in Neural Information Processing Systems}, vol.~33, pp. 4512--4523, 2020.

\bibitem{jain2020optimal}
A.~Jain and A.~Orlitsky, ``Optimal robust learning of discrete distributions from batches,'' in \emph{International Conference on Machine Learning}.\hskip 1em plus 0.5em minus 0.4em\relax PMLR, 2020, pp. 4651--4660.

\bibitem{jain2021robust}
------, ``Robust density estimation from batches: The best things in life are (nearly) free,'' in \emph{International Conference on Machine Learning}.\hskip 1em plus 0.5em minus 0.4em\relax PMLR, 2021, pp. 4698--4708.

\bibitem{diakonikolas2023algorithmic}
I.~Diakonikolas and D.~M. Kane, \emph{Algorithmic high-dimensional robust statistics}.\hskip 1em plus 0.5em minus 0.4em\relax Cambridge University Press, 2023.

\bibitem{goldwasser2020identifying}
S.~Goldwasser, A.~T. Kalai, Y.~T. Kalai, and O.~Montasser, ``Identifying unpredictable test examples with worst-case guarantees,'' in \emph{2020 Information Theory and Applications Workshop (ITA)}.\hskip 1em plus 0.5em minus 0.4em\relax IEEE, 2020, pp. 1--14.

\bibitem{kalai2021towards}
A.~Kalai and V.~Kanade, ``Towards optimally abstaining from prediction with ood test examples,'' \emph{Advances in Neural Information Processing Systems}, vol.~34, pp. 12\,774--12\,785, 2021.

\bibitem{goel2023adversarial}
S.~Goel, S.~Hanneke, S.~Moran, and A.~Shetty, ``Adversarial resilience in sequential prediction via abstention,'' \emph{Advances in Neural Information Processing Systems}, vol.~36, pp. 8027--8047, 2023.

\bibitem{hoeffding1965asymptotically}
W.~Hoeffding, ``Asymptotically optimal tests for multinomial distributions,'' \emph{The Annals of Mathematical Statistics}, pp. 369--401, 1965.

\bibitem{blahut1974hypothesis}
R.~Blahut, ``Hypothesis testing and information theory,'' \emph{IEEE Transactions on Information Theory}, vol.~20, no.~4, pp. 405--417, 1974.

\bibitem{csiszar1971error}
I.~Csisz{\'a}r, ``On the error exponent for source coding and for testing simple statistical hypotheses,'' \emph{Studia Sci. Math. Hungar.}, vol.~6, pp. 181--191, 1971.

\bibitem{polyanskiy2024information}
Y.~Polyanskiy and Y.~Wu. (online August 16, 2024) Information theory: From coding to learning. \url{https://people.lids.mit.edu/yp/homepage/data/itbook-export.pdf}.

\bibitem{huber1992robust}
P.~J. Huber, ``Robust estimation of a location parameter,'' in \emph{Breakthroughs in statistics: Methodology and distribution}.\hskip 1em plus 0.5em minus 0.4em\relax Springer, 1992, pp. 492--518.

\bibitem{brandao2020adversarial}
F.~G. Brandao, A.~W. Harrow, J.~R. Lee, and Y.~Peres, ``Adversarial hypothesis testing and a quantum {S}tein’s lemma for restricted measurements,'' \emph{IEEE Transactions on Information Theory}, vol.~66, no.~8, pp. 5037--5054, 2020.

\bibitem{cover1999elements}
T.~M. Cover and J.~A. Thomas, \emph{Elements of information theory}.\hskip 1em plus 0.5em minus 0.4em\relax John Wiley \& Sons, 2006, no. 2nd edition.

\end{thebibliography}
%\nocite{*}
% Generated by IEEEtran.bst, version: 1.14 (2015/08/26)

%\end{document}

\clearpage
\section*{Additional notation for the appendices}
We use the shorthand notation $$B_{\mathrm{KL}}^{(n)}(p,\lambda):=B_{\mathrm{KL}}(p,\lambda) \cap \Delta_n.$$

All logs and exponents are base $2$.
\section{Proof of Claim~\ref{clm:Bernoulli-equivalence}}\label{app:Bernoulli-equivalence}
We have
\begin{align}
&\min_{p\in B_{\mathrm{KL}}(P_1,\lambda)}\min_{u\in\Delta}D\left(p\|(1-\varepsilon)P_0+\varepsilon u \right)\nonumber\\
    &={}\min_{p\in B_{\mathrm{KL}}(P_1,\lambda)}\min_{\rho\in[0,1]}\min_{\substack{q_1,q_2\in\Delta:\\(1-\rho)q_1+\rho q_2=p}}\nonumber\\
    &\quad \min_{u\in\Delta(\mathcal{X})}D\left((1-\rho)q_1+\rho q_2\|(1-\varepsilon)P_0+\varepsilon u \right).
    \label{eq:KLterm}
\end{align}
\allowdisplaybreaks
Define two joint distributions for $Z,Y$ as follows:
\begin{align*}
    &P^{(0)}_{Z}={} \mathrm{Ber}(\varepsilon),\quad P^{(0)}_{Y|Z=0} ={} P_0, \quad P^{(0)}_{Y|Z=1} ={} u;\\
    &P^{(1)}_{Z}={}\mathrm{Ber}(\rho), \quad P^{(1)}_{Y|Z=0}={}q_1, \quad P^{(1)}_{Y|Z=1}={}q_2.
\end{align*}
Then, the KL divergence term on the right-side of~\eqref{eq:KLterm} is %given by
\begin{align*}
    D&\left((1-\rho)q_1+\rho q_2\middle\|(1-\varepsilon)P_0+\varepsilon u \right)\\
    &=D\left(P^{(1)}_Y\big\| P^{(0)}_Y \right)\\
    &\stackrel{\text{(a)}}{\le} D\left(P^{(1)}_{ZY}\big\| P^{(0)}_{ZY} \right)\\
    &\stackrel{\text{(b)}}{=} D\left(P^{(1)}_Z\middle\| P^{(0)}_Z \right) + D\left(P^{(1)}_{Y|Z}\middle\| P^{(0)}_{Y|Z}\middle| P^{(1)}_Z\right)\\
    &=D(\rho\|\varepsilon) + (1-\rho) D(q_1\| P_0) + \rho D(q_2\| v),
\end{align*}
where (a) is by the data processing inequality and (b) follows by chain rule. Choosing $q_2 = v$ yields
\begin{align*}
&\min_{p\in B_{\mathrm{KL}}(P_1,\lambda)}\min_{u\in\Delta}D\left(p\|(1-\varepsilon)P_0+\varepsilon u \right)\\
&\le \min_{p\in B_{\mathrm{KL}}(P_1,\lambda)}\min_{\rho\in[0,1]}\min_{\substack{q_1,v\in\Delta:\\(1-\rho)q_1+\rho v=p}} D(\rho\big\|\varepsilon) + (1-\rho) D(q_1\big\| P_0)\\
&= \min_{\rho\in[0,1]}\min_{\substack{q,v\in\Delta:\,\mathrm{s.t.}\\ D((1-\rho)q +\rho v\|P_1)\le\lambda}}D(\rho\|\varepsilon)+(1-\rho)D(q\|P_0).
\end{align*}
\section{Proofs for Memoryless Ingress Contamination}
\label{sec:Bernoulli-proofs}
%Here, Nature chooses ${Z^n}$ according to $\mathrm{Ber}(\varepsilon)$ i.i.d.
%\subsection*{Achievability}
% \begin{lemma}
%     There is a sequence of tests such that $\lim_{n\rightarrow\infty}-\frac{1}{n}\log(E_{0\perp|1}(n))= \lambda$ for $\lambda\in D(P_0\|P_1)$ and
%     \begin{align*}
%         &\lim_{n\rightarrow\infty}-\frac{1}{n}\log E_{1|0}(n)^{(a)}\\
%         &= \min_{p\in B_{\mathrm{KL}}(P_1,\lambda)}\min_{u\in\Delta}D\left(p\|(1-\varepsilon)P_0+\varepsilon u \right)
%     \end{align*}
% \end{lemma}
\begin{proof}[Proof of Claim \ref{clm:Bernoulli-achievability}]
Recall that the detector declares 1 if and only if $P_{Y^n}\in B_{KL}(P_1,\lambda+\delta)$.
%by~\cite[Theorem~$11.2.1$]{cover1999elements}, there is some $n_\lambda$ such that for all $n>n_\lambda$, we have
%\[
%\mathbb{P}_{Y^n\sim P_1^{\otimes n}}\left[D\left(P_{Y^n}\|P_1\right)\le\lambda+\delta\right]>1-2^{-n(\lambda+\delta/2)}.
%\]
%This ensures that $E_{0\perp|1}(n)<2^{-n\lambda}$ for all $n>n_\lambda$.

Throughout this proof, we use the shorthand notation 
\[
D_{\varepsilon}(P,Q\|R) := D((1-\varepsilon)P + \varepsilon Q \| R).
\]
We fix an adversary $A \in \cA_{\varepsilon}$ and calculate the probability of declaring 1 under $\cH_0$:
\begin{align*}
    E_{1|0}^{A}(n)
    &={}\mathbb{P}[D(P_{Y^n}\|P_1)\le\lambda+\delta]\\
    &={} \sum_{k=0}^{n}\sum_{\substack{z^n\in\{0,1\}^n:\\
    \mathrm{wt}(z^n)=k}} \mathbb{P}[Z^n = z^n] \cdot \\
    &\qquad \qquad\quad\mathbb{P}\left[ D(P_{Y^n}\|P_1)\le\lambda+\delta\middle| Z^n=z^n\right].
\end{align*}
The second term in the summand is given by
\begin{align*}
    &\mathbb{P}\left[ D(P_{Y^n}\|P_1)\le\lambda+\delta\middle| Z^n=z^n\right]\\
    %%%%%%%%%%%%%%%%
    ={}& \mathbb{P}\left[ D_{\frac{k}{n}}\left(P_{Y^n|_{\overline{z^n}}}, P_{Y^n|_{z^n}} \Big\| P_1\right) \le\delta\middle| Z^n=z^n\right]\\
    %%%%%%%%%%%%%%%%
    \le{}& \mathbb{P}\left[\exists v\in \Delta_{k}: D_{\frac{k}{n}}\left(P_{Y^n|_{\overline{z^n}}}, v \Big\| P_1\right) \le\lambda+\delta \middle| Z^n=z^n\right]\\
    %%%%%%%%%%%%%%%%
    ={}& \sum_{\substack{q\in\Delta_{n-k}:\exists v\in\Delta_{k}\, \mathrm{s.t.}\\
    D_{\frac{k}{n}}(q,v\|P_1)\le\lambda+\delta}} \mathbb{P}_{\cH_0}[P_{X^{n-k}}=q].\\
    %%%%%%%%%%%%%%%%
    \le{}& (n+1)^{|\cX|}\exp\left\{ -\left(n-k\right)\left( \min_{\substack{q\in\Delta_{n-k}:\exists v\in\Delta_{k}\, \mathrm{s.t.}\\
    D_{\frac{k}{n}}(q,v\|P_1)\le\lambda+\delta}} D(q\|P_0)\right)\right\}\\
    %%%%%%%%%%%%%%%%
    =:{}& e_k 
\end{align*}
where the last inequality follows from Sanov's theorem~\cite[Theorem $11.4.1$]{cover1999elements}. Therefore, we have
\begin{align*}
    E_{1|0}^{A}(n) \le{}& \sum_{k=0}^{n} \mathbb{P}[\mathrm{wt}(Z^n)=k] \cdot e_k\\
    \le{}& \sum_{k=0}^{n} \exp\left(-nD\left(\frac{k}{n}\Big\|\varepsilon\right)+2\log(k+1)\right) \cdot e_k
\end{align*}
by Sanov's theorem. Then,    
\begin{align*}
&E_{1|0}^{A}(n) \\
    \le{}& (n+1) \cdot \max_{k\in\{0\}\cup[n]}\exp\left(-nD\left(\frac{k}{n}\Big\|\varepsilon\right)+2\log(k+1)\right) \cdot e_k
\end{align*}
The right-side is evaluated to be $\exp(-nK_n$) where
\begin{align*}
    K_n ={}& \min_{k\in[n]}\min_{\substack{q\in\Delta_{n-k}:\exists v\in\Delta_{k}\, \mathrm{s.t.}\\
    D_{\frac{k}{n}}(q,v\|P_1)\le\lambda+\delta}}D\left(\frac{k}{n}\Big\|\varepsilon\right) + \left(1-\frac{k}{n}\right)D(q\|P_0)\\
    &\qquad\qquad \qquad \qquad\quad -\frac{2 \log(k+1)}{n} - \frac{|\cX|\log(n+1)}{n}.
\end{align*}
The above bound on $E_{1|0}^{A}(n)$ is independent of $A \in \cA_{\varepsilon}$ and hence applicable for $\sup_{A \in \cA_{\varepsilon}}E^{A}_{1|0}(n)$. That is, we have
\begin{align*}
&\liminf_{n \to \infty} -\frac{1}{n} \log E^{\mathsf{adv}}_{1|0}(n)\\
\ge{}&\min_{\rho\in[0,1]}\min_{\substack{q,v\in\Delta\,\mathrm{s.t.}\\ D_\rho(q,v\|P_1)\le\lambda+\delta}}D(\rho\|\varepsilon)+(1-\rho)D(q\|P_0).
\end{align*}
\end{proof}
%The result follows from  claim \ref{clm:Bernoulli-equivalence} and the claim below.
\begin{clm}
    \begin{align*}
    &\lim_{\delta\rightarrow0} \min_{\rho\in[0,1]} \min_{\substack{q,v\in\Delta\,\mathrm{s.t.}\\ D_\rho(q, v\|P_1)\le\lambda+\delta}} D(\rho\|\varepsilon) + (1-\rho) D(q\|P_0)\\
    ={}&\min_{\rho\in[0,1]}\min_{\substack{q,v\in\Delta:\,\mathrm{s.t.}\\ D_\rho(q, v\|P_1)\le\lambda}}D(\rho\|\varepsilon)+(1-\rho)D(q\|P_0).
    \end{align*}
\end{clm}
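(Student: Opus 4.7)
\bigskip

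\noindent\textbf{Proof proposal for the final claim.}

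Denote by $f(\delta)$ the inner double minimum with constraint $D_\rho(q,v\|P_1)\le \lambda+\delta$, where I use the shorthand $D_\rho(q,v\|P_1):=D((1-\rho)q+\rho v\|P_1)$ from the proof of Claim~\ref{clm:Bernoulli-achievability}. The goal is to show $\lim_{\delta\to 0^+} f(\delta)=f(0)$. Since enlarging $\delta$ only enlarges the feasible set, $f$ is non-increasing in $\delta$, so $f(\delta)\le f(0)$ for every $\delta>0$ and thus the ``$\le$'' direction is immediate: $\lim_{\delta\to 0^+}f(\delta)\le f(0)$. The work is in the reverse inequality, for which the plan is a standard compactness / lower-semicontinuity argument.

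First I would record the continuity facts that make the argument work. Because $P_0,P_1$ are assumed to have full support on the finite alphabet $\cX$, the map $q\mapsto D(q\|P_0)$ is continuous on all of $\Delta$, and the map $(\rho,q,v)\mapsto D_\rho(q,v\|P_1)=D((1-\rho)q+\rho v\|P_1)$ is continuous on $[0,1]\times \Delta\times \Delta$ (the mixture is a convex combination of probability vectors, all components of $P_1$ are strictly positive, so $x\log(x/P_1(\cdot))$ behaves continuously, with the usual convention $0\log 0=0$). Similarly, since $\varepsilon\in(0,1)$, the binary KL function $\rho\mapsto D_2(\rho\|\varepsilon)$ is continuous on $[0,1]$. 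Hence the objective $(\rho,q)\mapsto D_2(\rho\|\varepsilon)+(1-\rho)D(q\|P_0)$ is continuous on $[0,1]\times \Delta$, and the constraint function is continuous on $[0,1]\times\Delta\times\Delta$.

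Next, take a sequence $\delta_k\downarrow 0$, and for each $k$ pick a feasible triple $(\rho_k,q_k,v_k)$ with
\begin{equation*}
D_2(\rho_k\|\varepsilon)+(1-\rho_k)D(q_k\|P_0)\le f(\delta_k)+\tfrac{1}{k}.
\end{equation*}
The domain $[0,1]\times\Delta\times\Delta$ is compact, so by passing to a subsequence I may assume $(\rho_k,q_k,v_k)\to (\rho^*,q^*,v^*)$. By continuity of the constraint,
\begin{equation*}
D_{\rho^*}(q^*,v^*\|P_1)=\lim_{k\to\infty}D_{\rho_k}(q_k,v_k\|P_1)\le \lim_{k\to\infty}(\lambda+\delta_k)=\lambda,
\end{equation*}
so $(\rho^*,q^*,v^*)$ is feasible at $\delta=0$. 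By continuity of the objective,
\begin{equation*}
D_2(\rho^*\|\varepsilon)+(1-\rho^*)D(q^*\|P_0)=\lim_{k\to\infty}\bigl(D_2(\rho_k\|\varepsilon)+(1-\rho_k)D(q_k\|P_0)\bigr)\le \lim_{k\to\infty}f(\delta_k).
\end{equation*}
Since the left side is an upper bound on $f(0)$ (by feasibility of the limit point), and $f(\delta_k)$ is monotone, this gives $f(0)\le \lim_{\delta\to 0^+}f(\delta)$, completing the proof.

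The argument is essentially a one-line compactness / continuity observation. The only step that requires any care is verifying that the objective and constraint are jointly continuous on the closed domain, which is why I stated the full-support assumption explicitly and invoked continuity of $D_2(\rho\|\varepsilon)$ at the endpoints $\rho\in\{0,1\}$ (this is where $\varepsilon\in(0,1)$ is used). No separate handling of the case that the outer infimum is attained at a boundary $\rho\in\{0,1\}$ is needed, because compactness already covers it.
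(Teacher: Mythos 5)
Your proof is correct and takes essentially the same approach as the paper: the paper likewise argues via compactness of $[0,1]\times\Delta\times\Delta$ and continuity of objective and constraint, passing to a convergent subsequence of minimizers whose limit is feasible at $\delta=0$. The only notable difference is a modest improvement in rigor on your side: you work with near-minimizers (within $1/k$) so you never need to assert that an exact minimizer exists, whereas the paper justifies existence of minimizers by a stray appeal to convexity of the objective in $(\rho,q,v)$ — a claim that is actually false for the product term $(1-\rho)D(q\|P_0)$ but also unnecessary, since continuity plus compactness already suffices, exactly as you point out.
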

\begin{proof}
For each $\delta$, the above minimisation problem has a compact domain, and the objective function is convex as a function of $(\rho,q,v)$.
Therefore, this problem has a minimizer $(\rho_\delta, q_\delta, v_\delta)$.
Take any sequence $\{ \delta_m \}_{m\in\mathbb{N}}$ which converges to 0 as $m\rightarrow\infty$.
Consider any corresponding sequence of minimizers $\{(\rho_{\delta_m}, q_{\delta_m}, v_{\delta_m}) \}_{m\in\mathbb{N}}$.
Take any limit point of this sequence $(\rho_{\infty},q_\infty,v_\infty)$. This must satisfy $D(q_\infty,v_\infty\|P_0) \le\lambda$ (note that $P_0$ has full support).
That is, $(\rho_{\infty},q_\infty,v_\infty)$ is a feasible point for the minimization problem on the right-side.
Therefore
\begin{align*}
    &\min_{\rho\in[0,1]} \min_{\substack{q,v\in\Delta:\,\mathrm{s.t.}\\ D_\rho(q, v\|P_1)\le\lambda}}D(\rho\|\varepsilon)+(1-\rho)D(q\|P_0)\\
    \le{}& D(\rho_\infty\|\varepsilon)+(1-\rho_\infty) D(q_\infty\|P_0)
\end{align*}
Since this is true for an arbitrary sequence $\{ \delta_m \}_{m\in\mathbb{N}}$ and any convergent subsequence of $\{ (\rho_{\delta_m}, q_{\delta_m}, v_{\delta_m}) \}_{m\in\mathbb{N}}$, this shows
\begin{align*}
    &\limsup_{\delta\rightarrow0} \min_{\rho\in[0,1]} \min_{\substack{q,v\in\Delta\,\mathrm{s.t.}\\ D_\rho(q, v\|P_1)\le\lambda+\delta}} D(\rho\|\varepsilon) + (1-\rho) D(q\|P_0)\\
    ={}&\min_{\rho\in[0,1]}\min_{\substack{q,v\in\Delta:\,\mathrm{s.t.}\\ D_\rho(q, v\|P_1)\le\lambda}}D(\rho\|\varepsilon)+(1-\rho)D(q\|P_0).
\end{align*}
Since the left-side is an increasing function of $\delta$, this completes the proof.    
\end{proof}   
\section{Proofs for Fixed Weight Uniform Ingress Contamination}
\label{sec:fixed-weight-proofs}
\subsection*{Achievability}
% \begin{lemma}
% There is a sequence of tests such that $\liminf_{n\rightarrow\infty}-\frac{1}{n}\log(E_{0\perp|1}(n))=\lambda$ for $\lambda\in (0,D(P_0\|P_1))$ and
% \begin{align}
%     &\liminf_{n\rightarrow\infty}-\frac{1}{n}\log(E_{1|0}^{\adv}(n))\\
%     &\ge (1-\varepsilon) \min_{\substack{q,v\in\Delta\,\mathrm{s.t.}\\ D((1-\varepsilon)q + \varepsilon v\|P_1)\le\lambda}} D(q\|P_0)\label{eq:fw ach}
% \end{align}
% \end{lemma}
\begin{proof}[Proof of Claim \ref{clm:Uniform-achievability}] 
%  Fix $\delta>0$.
% By~\cite[Theorem~$11.2.1$]{cover1999elements}, there is some $n_\lambda$ such that for all $n>n_\lambda$, we have
% \[
% \mathbb{P}_{Y^n\sim P_1^{\otimes n}}\left[D\left(P_{Y^n}\|P_1\right)>\lambda+\delta\right]\le2^{-n(\lambda+\delta/2)}.
% \]
% Recall that under $\cH_1$, $Y^n$ are i.i.d. $P_1$. So, we use the following test:  declare 1 iff $P_{Y^n}\in B_{\mathrm{KL}}^{(n)}(P_1,\lambda+\delta)$.
% This ensures that $E_{0\perp|1}(n)<2^{-n\lambda}$ for all $n>n_\lambda$.

Throughout this proof, we use the shorthand notation 
\[
D_{\varepsilon}(P,Q\|R) := D((1-\varepsilon)P + \varepsilon Q \| R).
\]
For any adversary $A$ and for any $z^n$ of Hamming weight $n\varepsilon$, under $\mathcal{H}_0^{A}$, we have
\begin{align*}
    &\mathbb{P} \left[D\left(P_{Y^n}\| P_1\right) \le\lambda+\delta|Z^n=z^n\right]\\
    %%%%%%%%%%%%%%%%
    ={}& \mathbb{P} \left[D_\varepsilon\left(P_{Y^n|_{\overline{z^n}}}, P_{Y^n|_{z^n}}\middle\| P_1\right)\le\lambda+\delta \middle| Z^n=z^n\right]\\
    %%%%%%%%%%%%%%%
    \le{}& \mathbb{P} \left[\exists v\in\Delta_{n\varepsilon}:D_\varepsilon\left(P_{Y^n|_{\overline{z^n}}}, v\middle\|P_1\right)\le\lambda+\delta \middle| Z^n=z^n\right]\\
    %%%%%%%%%%%%%%%
    ={}& \sum_{\substack{q\in\Delta_{n(1-\varepsilon)}, v\in\Delta_{n\varepsilon}\, \mathrm{s.t.}\\ D_\varepsilon\left(q,v\| P_1\right)\le\lambda+\delta}} \mathbb{P}_{X^{n(1-\varepsilon)}\sim P_0^{\otimes n(1-\varepsilon)}}[P_{X^{n(1-\varepsilon)}}=q]\\
    %%%%%%%%%%%%%%%
    \le{}& \exp\left(-n(1-\varepsilon)\left( \min_{\substack{q,v\in\Delta \, \mathrm{s.t.} \\ D_\varepsilon \left(q, v\|P_1\right) \le\lambda+\delta}} D(q\|P_0)-o(1)\right)\right).
\end{align*}
where the last step follows by Sanov's theorem~\cite[Theorem $11.4.1$]{cover1999elements}. Therefore, averaging over all $z^n$ of Hamming weight $n\varepsilon$, we have for $n > n_\lambda$,
\begin{align*}
&E^{A}_{1|0}(n)\le\mathbb{P}_{\cH_0}[D(P_{Y^n}\|P_1)\le\lambda+\delta]\\
    \le{}&\exp\left(-n(1-\varepsilon)\left(\min_{\substack{q,v \in\Delta\,\mathrm{s.t.}\\ D_\varepsilon \left(q,v\| P_1\right) \le\lambda+\delta}} D(q\|P_0)-o(1)\right)\right)
\end{align*}
whereby
\begin{align*}
    \liminf_{n\rightarrow\infty}-\frac{1}{n}\log E^{A}_{1|0}(n)\ge{}&(1-\varepsilon) \min_{\substack{q,v\in\Delta\, \mathrm{s.t.}\\ D_\varepsilon\left(q, v\| P_1\right)\le\lambda+\delta}}D\left(q\|P_0\right).
\end{align*}
Since this is true for any adversary $A$, the proof is complete.
\end{proof}
\begin{clm}
    \[\lim_{\delta\rightarrow0^+} \min_{\substack{q,v\in\Delta\, \mathrm{s.t.}\\ D_\varepsilon\left(q, v\| P_1\right) \le\lambda+\delta}} D\left(q\|P_0\right) = \min_{\substack{q,v\in\Delta\, \mathrm{s.t.}\\ D_\varepsilon(q, v\| P_1) \le \lambda}} D\left(q\|P_0\right)\]
\end{clm}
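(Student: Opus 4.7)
\medskip

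The plan is to mirror the continuity argument already used for the analogous claim in Appendix B (the second claim in Section B), since the structure here is identical: a min of a continuous function over a feasibility set defined by another continuous constraint, with the constraint relaxed by $\delta$. Let $m(\delta)$ denote the LHS value at slack $\delta$ (and $m(0)$ the RHS). Since the feasibility set is monotonically shrinking as $\delta\downarrow 0$, $m(\delta)$ is non-increasing in $\delta^{-1}$, so $\lim_{\delta\to 0^+} m(\delta)$ exists and equals $\sup_{\delta>0} m(\delta)$. Trivially $m(\delta)\le m(0)$ for every $\delta>0$ (the $\delta=0$ feasibility set is contained in every $\delta>0$ feasibility set), giving $\lim_{\delta\to 0^+} m(\delta)\le m(0)$.

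For the reverse inequality, I would use compactness. The domain $\Delta\times\Delta$ is compact, and both $(q,v)\mapsto D_\varepsilon(q,v\|P_1) = D((1-\varepsilon)q+\varepsilon v\|P_1)$ and $q\mapsto D(q\|P_0)$ are continuous because $P_0$ and $P_1$ have full support (so neither divergence can blow up anywhere on $\Delta$). Continuity of the objective plus compactness of the feasibility set ensures that for each $\delta>0$ a minimizer $(q_\delta,v_\delta)$ exists. Choose any sequence $\delta_m\downarrow 0$ and extract a convergent subsequence $(q_{\delta_{m_k}},v_{\delta_{m_k}})\to(q^*,v^*)$ in $\Delta\times\Delta$. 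Continuity of $D_\varepsilon(\cdot,\cdot\|P_1)$ gives $D_\varepsilon(q^*,v^*\|P_1)\le\lambda$, so $(q^*,v^*)$ is feasible for $m(0)$. Continuity of $D(\cdot\|P_0)$ then yields $m(0)\le D(q^*\|P_0) = \lim_k D(q_{\delta_{m_k}}\|P_0) = \lim_k m(\delta_{m_k}) = \lim_{\delta\to 0^+} m(\delta)$. Combining with the easy direction completes the proof.

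I expect no real obstacle here; the only subtlety is confirming that the KL divergences $D(\cdot\|P_0)$ and $D(\cdot\|P_1)$ are finite and continuous on all of $\Delta$, which follows from the standing full-support assumption on $P_0$ and $P_1$. Since the argument is essentially identical to the one in Appendix B, one could alternatively abbreviate the write-up by citing that proof and noting that the Bernoulli factor $D_2(\rho\|\varepsilon)$ and the $(1-\rho)$ weight (which are continuous in $\rho$) play no essential role in that continuity argument.
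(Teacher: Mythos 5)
Your proposal is correct and follows essentially the same approach as the paper: use compactness to extract a convergent subsequence of minimizers, continuity to show the limit point is feasible for the $\delta=0$ problem, and monotonicity of $m(\delta)$ to upgrade subsequential convergence to a genuine limit. One trivial slip: you write that $m(\delta)$ is ``non-increasing in $\delta^{-1}$,'' but since shrinking the feasibility set raises the minimum, you mean $m(\delta)$ is non-increasing in $\delta$ (equivalently non-decreasing in $\delta^{-1}$); the conclusion $\lim_{\delta\to 0^+}m(\delta)=\sup_{\delta>0}m(\delta)$ that you actually use is stated correctly.
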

\begin{proof}
    For each $\delta$, the above minimisation problem has a compact non-empty domain, and the objective function is convex as a function of $(q,v)$.
    Therefore, this problem has a minimizer $(q_\delta, v_\delta)$.
    Take any sequence $\{ \delta_m \}_{m\in\mathbb{N}}$ which converges to 0 as $m\rightarrow\infty$.
    Consider any corresponding sequence of minimizers $\{(q_{\delta_m}, v_{\delta_m}) \}_{m\in\mathbb{N}}$.
    Take any limit point of this sequence, $(q_{\infty},v_\infty)$. This must satisfy $D_\varepsilon(q_{\infty},v_\infty
    \|P_1) \le\lambda$.
    That is, $(q_{\infty},v_\infty)$ is a feasible point for the minimization problem on the right-side.
    Therefore
    \begin{align*}
        \min_{\substack{q,v\in\Delta\, \mathrm{s.t.}\\ D_\varepsilon\left(q, v\| P_1\right) \le\lambda}} D(q\|P_0)\le D(q_\infty\|P_0)
    \end{align*}
    Since this is true for an arbitrary sequence $\{ \delta_m \}_{m\in\mathbb{N}}$ and any convergent subsequence of $\{ (p_{\delta_m}, q_{\delta_m}) \}_{m\in\mathbb{N}}$, this shows
\begin{align*}
    &\limsup_{\delta\rightarrow0^+} \min_{\substack{q,v\in\Delta\, \mathrm{s.t.}\\ D_\varepsilon\left(q, v\| P_1\right) \le\lambda+\delta}} D(q\|P_0)\\
    \ge{}&\min_{\substack{q,v\in\Delta\, \mathrm{s.t.}\\ D_\varepsilon\left(q, v\| P_1\right) \le\lambda}} D(q\|P_0).
\end{align*}
Since the left-side is an increasing function of $\delta$, this completes the proof.
\end{proof}
% Therefore,
% \begin{align*}
%     \liminf_{n\rightarrow\infty}-\frac{1}{n}\log E^{\adv}_{1|0}(n)\ge{}&(1-\varepsilon) \min_{\substack{q,v\in\Delta\, \mathrm{s.t.}\\ D_\varepsilon(q, v\| P_1) \le \lambda}}D\left(q\|P_0\right).
% \end{align*}
\subsection*{Converse}
% \begin{lemma}
% If  $\liminf_{n\rightarrow\infty} -\frac{1}{n}\log\left(E_{0\perp|1}(n)\right)\ge \lambda$,
%     \begin{align*}
%         &\liminf_{n\rightarrow\infty} -\frac{1}{n} \log\left(E^{\adv}_{1|0}(n) \right)\\
%         \le{}& (1-\varepsilon) \min_{\substack{q,v\in\Delta\,\mathrm{s.t.}\\ D((1-\varepsilon)q + \varepsilon v\|P_1)\le\lambda}} D(q\|P_0)
%     \end{align*}
% \end{lemma}
\begin{proof}[Proof of Claim \ref{clm:Uniform-converse}]
Consider an adversary $A$ who restricts themselves to drawing i.i.d. samples $W_k$ from a fixed distribution $v$ whenever $Z_k = 1$. For this restricted adversary, the distribution of $Y^n$ is invariant under permutations under both $\cH^{A}_0$ and $\cH_1$. Hence it suffices to consider (randomized) detectors that take $P_{Y^n}$ as input.

Let a sequence of tests be described by a function $$g_n:\Delta_n(\cX) \rightarrow[0,1]$$ where $g_n(q),q\in\Delta_n$ is the probability that the test declares 1 when $P_{Y^n}=q$. Also, let $\overline{g}_n$ denote $1-g_n$.
Consider any sequence of tests satisfying 
$$\liminf_{n\rightarrow\infty} -\frac{1}{n} \log E_{0\perp|1}(n)=\lambda .$$
Fix a $\delta>0$. There exists an $n_\lambda$ such that for all $n>n_\lambda$,
\begin{align*}
    & 2^{-n(\lambda-\delta)} \\
    \ge{}&\sum_{q\in\Delta_n}\overline{g}(q) \mathbb{P}_{\cH_1} [P_{Y^n}=q]\\
    \ge{}&\sum_{q\in B^{(n)}_{KL}(P_1,{\lambda-2\delta})} \overline{g_n}(q) \mathbb{P}_{\cH_1}[P_{Y^n}=q]\\
    \ge{}&\min_{q\in B^{(n)}_{KL}(P_1,{\lambda-2\delta})}\mathbb{P}_{\cH_1}[P_{Y^n}=q] \sum_{q\in B^{(n)}_{KL}(P_1,{\lambda-2\delta})} \overline{g_n}(q)\\
    \ge{}&\min_{q\in B^{(n)}_{KL}(P_1,{\lambda-2\delta})}\mathbb{P}_{\cH_1}[P_{Y^n}=q] \max_{q\in B^{(n)}_{KL}(P_1,{\lambda-2\delta})} \overline{g_n}(q)\\
    ={}& 2^{-n(\lambda-2\delta+o(1))} \max_{q\in B^{(n)}_{KL}(P_1,{\lambda-2\delta})} \overline{g_n}(q).
\end{align*}
where the last line follows from Sanov's theorem\cite[Theorem 11.4.1]{cover1999elements}.
Therefore,
\begin{align*}
    \max_{q\in B^{(n)}_{KL}(P_1,{\lambda-2\delta})} \overline{g}(q) \le{}& 2^{-n(\delta-o(1))}.
\end{align*}
Thus, for $n$ large enough, we have
\[
\max_{q\in B^{(n)}_{KL}(P_1,{\lambda-2\delta})} \overline{g}(q)\le 2^{-n\delta/2}.
\]
Therefore, for any such $n$, we have
\begin{align*}
    &E^{A}_{1|0}(n)\ge{} \sum_{q\in\Delta_n} g(q) \mathbb{P}_{\cH_0^A} [P_{Y^n}=q]\\
    \ge{}& \sum_{q\in B^{(n)}_{KL}(P_1,{\lambda-2\delta})} g_n(q) \mathbb{P}_{\cH_0^A} [P_{Y^n}=q]\\
    \ge{}& \min_{q\in B^{(n)}_{KL}(P_1,{\lambda-2\delta})}g_n(q) \sum_{q\in B^{(n)}_{KL}(P_1,{\lambda-2\delta})} \mathbb{P}_{\cH_0^A} [P_{Y^n}=q]\\
    \ge{}& \left(1-2^{-n\delta/2}\right) \sum_{q\in B^{(n)}_{KL}(P_1,{\lambda-2\delta})} \mathbb{P}_{\cH_0^A} [P_{Y^n}=q].
\end{align*}
For any $z^n$ of Hamming weight $n\varepsilon$, consider
\begin{align*}
    &\sum_{q\in B^{(n)}_{KL}(P_1,{\lambda-2\delta})} \mathbb{P}_{\cH_0^A} [P_{Y^n}=q|Z^n=z^n]\\
    %%%%%%%%%%%%%
    ={}& \sum_{\substack{q_1\in\Delta_{n(1-\varepsilon)},\\ q_2\in\Delta_{n\varepsilon}:\\ (1-\varepsilon)q_1+\varepsilon q_2=q\\ q\in B^{(n)}_{KL}(P_1,{\lambda-2\delta})}} \mathbb{P}_{\cH_0^A} [P_{Y^n|_{z^n}}=q_1\mathrm{\ and\ } P_{Y^n|_{\overline{z^n}}}=q_2]\\
    %%%%%%%%%%%%%%%%%
    ={}& \sum_{\substack{q_1\in\Delta_{n(1-\varepsilon)},\\ q_2\in\Delta_{n\varepsilon}:\\ (1-\varepsilon)q_1+\varepsilon q_2=q\\ q\in B^{(n)}_{KL}(P_1,{\lambda-2\delta})}} \mathbb{P}_{X^{n(1-\varepsilon)}\sim P_0^{ n(1-\varepsilon)}} \left[P_{X^{n(1-\varepsilon)}}=q_1\right]\\
    &\qquad\qquad\qquad\qquad \cdot\mathbb{P}_{W^{n\varepsilon}\sim v^{n\varepsilon}} \left[ P_{W^{n\varepsilon}}=q_2\right]\\
\end{align*}
\begin{align*}
    \ge{}& \sum_{\substack{q_1\in\Delta_{n(1-\varepsilon)},\\ q_2\in\Delta_{n\varepsilon}:\\ (1-\varepsilon)q_1+\varepsilon q_2=q\\ q\in B^{(n)}_{KL}(P_1,{\lambda-2\delta})}} 2^{-n(1-\varepsilon)(D(q_1\|P_0)+o(1))} 2^{-n\varepsilon (D(q_2\|v)+o(1))}\\
    %%%%%%%%%%%%%%%%%%%%%%%%%%
    \ge{}& \max_{\substack{q_1\in\Delta_{n(1-\varepsilon)}\\ q_2\in\Delta_{n\varepsilon}\\ (1-\varepsilon)q_1 + \varepsilon q_2 = q\\ q\in B^{(n)}_{KL}(P_1,{\lambda-2\delta})}} 2^{-n\left((1-\varepsilon) D(q_1\|P_0)+\varepsilon D(q_2\|v)+o(1)\right)}
\end{align*}
where the last step follows from the fact that $|\Delta_n|,|\Delta_{n\varepsilon}|, |\Delta_{n(1-\varepsilon)}|$ are only poly($n$) and so contribute only $o(1)$ to the exponent.

Note that this bound is independent of $z^n$. Furthermore, since $\bigcup_{n\ge1}\Delta_n$ is dense in $\Delta$, for any $q,q_1,q_2\in\Delta$, we can find a sequence $q^{(n)}\in\Delta_n, q_1^{(n)}\in\Delta_{n(1-\varepsilon)}, q_2^{(n)}\in\Delta_{n\varepsilon}$ which approaches it. In particular the sequence of minimizers of $$\min_{\substack{q_1\in\Delta_{n(1-\varepsilon)}\\ q_2\in\Delta_{n\varepsilon}\\ (1-\varepsilon)q_1 + \varepsilon q_2 = q\\ q\in B^{(n)}_{KL}(P_1,{\lambda-2\delta})}} (1-\varepsilon) D(q_1\|P_0)+\varepsilon D(q_2\|v)$$ will approach the minimizer of $$\min_{\substack{q,q_1,q_2\in\Delta:\\ (1-\varepsilon)q_1 + \varepsilon q_2 = q\\ q\in B_{KL}(P_1,{\lambda-2\delta})}} (1-\varepsilon) D(q_1\|P_0)+\varepsilon D(q_2\|v)$$

Therefore we have
\begin{align*}
    E_{1|0}^{A}\ge{}& \left(1-2^{n\delta/2}\right)\\
    &\cdot\exp\left(-n\left( \min_{\substack{q,q_1,q_2\in\Delta:\\ (1-\varepsilon)q_1 + \varepsilon q_2 = q\\ q\in\Delta_n\cap B_{KL}(P_1,{\lambda-2\delta})}} (1-\varepsilon) D(q_1\|P_0)\right.\right.\\
    &\Bigg.\Bigg.+\varepsilon D(q_2\|v)+o(1)\Bigg)\Bigg)
\end{align*}
By choosing $v=q_2$, we get
\begin{align*}
    \liminf_{n\rightarrow\infty}-\frac{1}{n}\log E^{\adv}_{1|0}(n)\le{}& \liminf_{n\rightarrow\infty}-\frac{1}{n}\log E^{A}_{1|0}(n)\\
    \le{}&(1-\varepsilon) \min_{\substack{q_1,v\in\Delta\\
    q\in B_{KL}(P_1,{\lambda-2\delta}):\\ (1-\varepsilon) q_1+\varepsilon v=q}} D(q_1\|P_0)
\end{align*}
%The following claim completes the proof.
\end{proof}
\begin{clm}
    \[ \lim_{\delta\rightarrow0} \min_{\substack{q_1,v\in\Delta\\
    q\in B_{KL}(P_1,{\lambda-2\delta}):\\ (1-\varepsilon) q_1+\varepsilon v=q}} D(q_1\|P_0) = \min_{\substack{q_1,v\in\Delta\\
    q\in B_{KL}(P_1,{\lambda}):\\ (1-\varepsilon) q_1+\varepsilon v=q}} D(q_1\|P_0)\]
\end{clm}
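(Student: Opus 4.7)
The plan is to mirror the analogous continuity claim stated just before, in the achievability part of Section~\ref{sec:fixed-weight-proofs}, with the twist that here the constraint parameter $\lambda - 2\delta$ tightens (rather than loosens) as $\delta$ grows. Hence $\mathrm{LHS}(\delta)$ is non-decreasing in $\delta$, so the limit as $\delta \to 0^+$ exists, and it suffices to sandwich it between the two sides.

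The easy inequality $\lim_{\delta \to 0^+}\mathrm{LHS} \ge \mathrm{RHS}$ is immediate: for every $\delta > 0$ the LHS feasible set $\{(q_1,v,q) : (1-\varepsilon)q_1 + \varepsilon v = q,\ D(q\|P_1) \le \lambda - 2\delta\}$ is contained in the RHS feasible set, so the LHS minimum dominates the RHS minimum, and this passes to the limit.

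For the harder direction $\lim_{\delta \to 0^+}\mathrm{LHS} \le \mathrm{RHS}$, I would perturb a minimizer of the RHS toward $P_1$. Let $(q_1^\star, v^\star, q^\star)$ be a minimizer of the RHS (existence: compact domain together with continuity of $q_1 \mapsto D(q_1\|P_0)$, using the full-support assumption on $P_0$). For $\alpha \in (0,1)$, set $q_1^\alpha := (1-\alpha)q_1^\star + \alpha P_1$, $v^\alpha := (1-\alpha)v^\star + \alpha P_1$, and $q^\alpha := (1-\varepsilon)q_1^\alpha + \varepsilon v^\alpha = (1-\alpha)q^\star + \alpha P_1$. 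Convexity of $D(\cdot\|P_1)$ together with $D(P_1\|P_1)=0$ yield
\[
D(q^\alpha\|P_1) \le (1-\alpha)\, D(q^\star\|P_1) \le (1-\alpha)\lambda < \lambda,
\]
so $(q_1^\alpha, v^\alpha, q^\alpha)$ is feasible for the LHS as soon as $2\delta \le \alpha\lambda$, giving $\mathrm{LHS}(\delta) \le D(q_1^\alpha\|P_0)$. Sending $\delta \to 0^+$ and then $\alpha \to 0^+$, continuity of $q_1 \mapsto D(q_1\|P_0)$ produces $\lim_{\delta \to 0^+}\mathrm{LHS} \le D(q_1^\star\|P_0) = \mathrm{RHS}$. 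The main obstacle is the boundary case $D(q^\star\|P_1) = \lambda$, where the RHS minimizer saturates the constraint and cannot be directly reused under the stricter LHS constraint; the perturbation toward $P_1$ is engineered precisely to strictly slacken the constraint (using $\lambda > 0$) while perturbing the objective continuously. A compactness-plus-convergent-subsequence argument as in the earlier claim would give the same conclusion, but the explicit perturbation above is more direct.
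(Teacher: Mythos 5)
Your proof is correct. Both directions check out: the easy inequality follows from monotone inclusion of feasible sets, and the harder one from the explicit perturbation. The joint condition $(1-\varepsilon)q_1^\alpha + \varepsilon v^\alpha = q^\alpha$ does hold by linearity, the convexity bound $D(q^\alpha\|P_1)\le(1-\alpha)\lambda$ is valid, and continuity of $D(\cdot\|P_0)$ (which uses that $P_0$ has full support) closes the argument.

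The overall strategy is the same as the paper's: take a minimizer of the right-hand side and nudge it into the shrunken feasible set for small $\delta$, then appeal to continuity. The difference is in the mechanism. The paper perturbs only the mixture $q^*$ to a nearby $q'\in B_{\mathrm{KL}}(P_1,\lambda-2\delta)\cap B_{\mathrm{TV}}(q^*,\eta)$ and then asserts, without detailed justification, that this $q'$ admits a decomposition $q'=(1-\varepsilon)q_1'+\varepsilon v'$ with $q_1'$ TV-close to $q_1^*$ --- a step that does require an argument, since solving for $q_1'$ with $v'=v^*$ fixed need not produce a probability distribution. Your version sidesteps this entirely by perturbing all three components of the triple toward $P_1$ simultaneously; the constraint $(1-\varepsilon)q_1+\varepsilon v=q$ is then preserved automatically by linearity, and convexity of $D(\cdot\|P_1)$ with $D(P_1\|P_1)=0$ gives the strict slack in the KL constraint. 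This is a cleaner and more self-contained execution of the same idea, and it also makes transparent why the boundary case $D(q^\star\|P_1)=\lambda$ causes no difficulty.
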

\begin{proof}
    The minimisation problem on the right-side has a compact, non-empty domain, and $D(q_1\|P_0)$ is a continuous function of $(q_1,v,q)$.
    Therefore it has a minimiser $(q_1^*,v^*,q^*)$.
    For any $\eta>0$, we can find a $\delta>0$ small enough, so that $B_{\mathrm{KL}}(P_1,\lambda-2\delta)\cap B_{\mathrm{TV}}(q^*,\eta)\ne\varnothing$ (Note that we have assumed $P_0,P_1$ to have full support).
    Pick a $q'\in B_{\mathrm{KL}}(P_1,\lambda-2\delta)\cap B_{\mathrm{TV}}(q^*,\eta)$.
    There exist $q_1',v'\in\Delta$ such that $(1-\varepsilon)q_1'+\varepsilon v'=q'$ and $d_{\mathrm{TV}}(q_1',q_1^*) \le\frac{\eta}{1-\varepsilon}$.
    Since $d_{\mathrm{TV}}(q_1',q_1^*)$ can be made arbitrarily small, we can make $D(q_1'\|P_0)-D(q_1^*\|P_0)\le\epsilon$ for any $\epsilon>0$.
    Since $(q_1',v',q')$ is a feasible point for the minimisation problem on the left-side, for $\delta$ small enough,
    \begin{align*}
        \min_{\substack{q_1,v\in\Delta\\
        q\in B_{KL}(P_1,{\lambda-2\delta}):\\ (1-\varepsilon) q_1+\varepsilon v=q}} D(q_1\|P_0) \ge \min_{\substack{q_1,v\in\Delta\\
        q\in B_{KL}(P_1,{\lambda}):\\ (1-\varepsilon) q_1+\varepsilon v=q}} D(q_1\|P_0)+\epsilon
    \end{align*}
    Since the expression on the left-side is an increasing function of $\delta$, this completes the proof.
\end{proof}

\section{Proofs for Strong Contamination}
\label{sec:SC-proofs}
\subsection*{Achievability}
\begin{proof}[Proof of Claim \ref{clm:adv-choice-achievability}]
% The test is the same as in the previous section. Fix $\delta>0$. Declare 1 if and only if $P_{Y^n}\in B_{KL}(P_1,\lambda+\delta)$. Again by~\cite[Theorem~$11.2.1$]{cover1999elements}, there is some $n_\lambda$ such that for all $n>n_\lambda$, we have
% \[
% \mathbb{P}_{Y^n\sim P_1^{\otimes n}}\left[D\left(P_{Y^n}\|P_1\right)\le\lambda+\delta\right]>1-2^{-n(\lambda+\delta/2)}.
% \]
% This ensures that $E_{0\perp|1}(n)<2^{-n\lambda}$ for all $n>n_\lambda$.
For any fixed adversary $A$,
    \begin{align*}
        E_{1|0}^A={}& \mathbb{P}_{\cH_0}[D\left(P_{Y^n}\|P_1\right)\le\delta]\\
        \le{}& \mathbb{P}_{\cH_0} \left[ \exists p\in\Delta_n: d_{\mathrm{TV}}(p,P_{X^n})\le\varepsilon,\,D(p\|P_1)\le\lambda+\delta\right]\\
        ={}& \sum_{\substack{q\in\Delta_n:\\ \exists p\in \Delta_n\, \mathrm{s.t.}\\ d_{\mathrm{TV}}(p,q)\le\varepsilon\\ D(p\|P_1)\le\lambda+\delta}} \mathbb{P}_{X^n\sim P_0^{\otimes n}} [P_{X^n}=q]\\
        ={}& \exp\left(-n\left(\min_{\substack{p,q\in\Delta_n\, \mathrm{s.t.}\\ d_{\mathrm{TV}}(p,q)\le\varepsilon\\ D(p\|P_1)\le\lambda+\delta}} D(q\|P_0)-o(1)\right)\right)
    \end{align*}
    Here, the last inequality follows from Sanov's theorem \cite[Theorem 11.4.1]{cover1999elements}.    
    % Let $\mathcal{D}(\delta):=\{(p,q)\in \Delta\times\Delta: d_{\mathrm{TV}}(p,q)\le\varepsilon, D(p\|P_1)\le\lambda+\delta, D(q\|P_0)<\infty\}$.
    % If $\mathcal{D}(\delta)\ne\varnothing$, then $\mathcal{D}(2\delta) \cap\Delta_n\ne\varnothing$ for $n$ large enough.
    Since this is true for any adversary, we have
    \[-\frac{1}{n}\log E^{\adv}_{1|0} \ge \min_{\substack{p,q\in\Delta\, \mathrm{s.t.}\\ d_{\mathrm{TV}}(p,q)\le\varepsilon\\ D(p\|P_1)\le\lambda+2\delta}} D(q\|P_0)-o(1)\]
%    The following claim completes the proof
\end{proof}
    \begin{clm}
        \[ \lim_{\delta\rightarrow0} \min_{\substack{p,q\in\Delta\, \mathrm{s.t.}\\ d_{\mathrm{TV}}(p,q)\le\varepsilon\\ D(p\|P_1)\le\lambda+\delta}} D(q\|P_0) = \min_{\substack{p,q\in\Delta\, \mathrm{s.t.}\\ d_{\mathrm{TV}}(p,q)\le\varepsilon\\ D(p\|P_1)\le\lambda}} D(q\|P_0)\]
        \label{clm:adv_ach}
    \end{clm}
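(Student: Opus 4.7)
The plan is to follow the same template already established for the two analogous continuity claims in Sections~\ref{sec:Bernoulli-proofs} and~\ref{sec:fixed-weight-proofs}. The only substantive ingredient is a routine compactness/continuity argument: because $P_0$ and $P_1$ are assumed to have full support on $\cX$, the maps $(p,q)\mapsto D(p\|P_1)$, $(p,q)\mapsto D(q\|P_0)$, and $(p,q)\mapsto d_{\mathrm{TV}}(p,q)$ are all continuous on the compact set $\Delta\times\Delta$.

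First I would introduce $\cT_\delta := \{(p,q)\in\Delta\times\Delta : d_{\mathrm{TV}}(p,q)\le\varepsilon,\, D(p\|P_1)\le\lambda+\delta\}$ for $\delta\ge 0$, and let $F(\delta):=\min_{(p,q)\in\cT_\delta} D(q\|P_0)$. By the continuity noted above, each $\cT_\delta$ is a closed (hence compact) subset of $\Delta\times\Delta$, and it is nonempty since $(p,q)=(P_1,P_1)$ is feasible; so the minimum is attained at some $(p_\delta, q_\delta)$. Because $\cT_0\subseteq\cT_\delta$ for every $\delta>0$, the function $\delta\mapsto F(\delta)$ is non-increasing, so $F(\delta)\le F(0)$ for all $\delta>0$ and $\lim_{\delta\to 0^+}F(\delta)$ exists and is at most $F(0)$. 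This is the easy half.

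For the reverse inequality I would pick any sequence $\delta_m\downarrow 0$ and consider the minimizers $(p_{\delta_m},q_{\delta_m})\in\Delta\times\Delta$. Compactness of $\Delta\times\Delta$ lets me extract a convergent subsequence with limit $(p_\infty, q_\infty)$; continuity of $D(\cdot\|P_1)$ and of $d_{\mathrm{TV}}$ gives $D(p_\infty\|P_1)\le\lambda$ and $d_{\mathrm{TV}}(p_\infty, q_\infty)\le\varepsilon$, so $(p_\infty, q_\infty)\in\cT_0$. Continuity of $D(\cdot\|P_0)$ then gives $F(\delta_m)=D(q_{\delta_m}\|P_0)\to D(q_\infty\|P_0)\ge F(0)$. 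Combined with monotonicity this forces $\lim_{\delta\to 0^+} F(\delta)=F(0)$, which is exactly the claim.

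The main (and essentially only) obstacle is ensuring that KL divergence is continuous at the possibly boundary limit point $(p_\infty, q_\infty)$; the standing full-support assumption on $P_0$ and $P_1$ makes this automatic, so no additional care is needed. The argument is otherwise identical to the continuity arguments already appearing in the preceding two appendices.
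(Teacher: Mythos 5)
Your argument is correct and follows essentially the same route as the paper's proof: compactness of the feasible set, existence of minimizers, extraction of a convergent subsequence of minimizers whose limit point lies in the $\delta=0$ feasible set, and continuity of the divergence (guaranteed by the full-support assumption). The paper phrases this slightly more tersely and invokes convexity where you invoke continuity to get existence of a minimizer, but the substance is identical.
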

    \begin{proof}
%        If for some $\delta>0$, the minimisation problem on the left-side has an empty domain, then that on the right-side also has an empty domain - both sides are $\infty$. 
        For each $\delta$, the minimisation problem on the left-side has a compact, non-empty domain, and the objective function is convex as a function of $(p,q)$.
        Therefore, this problem has a minimizer $(p_\delta, q_\delta)$.
        Take any sequence $\{ \delta_m \}_{m\in\mathbb{N}}$ which converges to 0 as $m\rightarrow\infty$.
        Consider any corresponding sequence of minimizers $\{(p_{\delta_m}, q_{\delta_m}) \}_{m\in\mathbb{N}}$.
        Take any limit point of this sequence $(p_{\infty},q_\infty)$. This must satisfy $D(p_{\infty}\|P_1)\le\lambda$.
        %(this is because all $p_\delta$'s here are absolutely continuous with respect to $P_1$).
        That is, $(p_{\infty},q_\infty)$ is a feasible point for the minimization problem on the right-side.
        Therefore
        \begin{align*}
            &\min_{\substack{q\in\Delta_n:\\ \exists p\in \Delta_n\, \mathrm{s.t.}\\ d_{\mathrm{TV}}(p,q)\le\varepsilon,\\ D(p\|P_1)\le\lambda}} D(q\|P_0) \le{} D(q_\infty\|P_0).
        \end{align*}
        Since this is true for an arbitrary sequence $\{ \delta_m \}_{m\in\mathbb{N}}$ and any convergent subsequence of $\{ (p_{\delta_m}, q_{\delta_m}) \}_{m\in\mathbb{N}}$, this shows
        \begin{align*}
            &\limsup_{\delta\rightarrow0^+} \min_{\substack{q\in\Delta_n:\\ \exists p\in \Delta_n\, \mathrm{s.t.}\\ d_{\mathrm{TV}}(p,q)\le\varepsilon,\\ D(p\|P_1)\le\lambda+\delta}} D(q\|P_0)
            \ge{} \min_{\substack{q\in\Delta_n:\\ \exists p\in \Delta_n\, \mathrm{s.t.}\\ d_{\mathrm{TV}}(p,q)\le\varepsilon,\\ D(p\|P_1)\le\lambda}} D(q\|P_0).
        \end{align*}
        Since the left-side is an increasing function of $\delta$, this completes the proof.    
    \end{proof}
\subsection*{Converse}
% \begin{lemma}
%     %If $\lambda_{0\perp|1}\ge \lambda>\min_{\substack{p\in\Delta: p\left(\overline{\mathrm{supp}(P_0)}\right)\le\varepsilon}} D(p\|P_1)$,
%     If  $\liminf_{n\rightarrow\infty} -\frac{1}{n}\log\left(E_{0\perp|1}(n)\right)\ge \lambda$,
%     \begin{align*}
%         &\liminf_{n\rightarrow\infty} -\frac{1}{n} \log\left(E^{(a)}_{1|0}(n) \right)\\
%         \le{}& \min_{p,q\in \Delta:d_{\mathrm{TV}}(p,q)\le\varepsilon,\,D(p,P_1)\le\lambda}D\left(q\|P_0\right)
%     \end{align*}
% \end{lemma}
\begin{proof}[Details of proof of Converse of Theorem \ref{thm:SC}]
    We design the adversary's attack in steps.

    \emph{Step 1:}
    Fix some $\delta>0$ with $\delta<\varepsilon,\delta<\lambda$ (Later we will take the limit $\delta\rightarrow0$).
    Define \[p^*,q^*:=\arg\min_{\substack{p,q\in\Delta:\\ d_{\mathrm{TV}}(p,q) \le\varepsilon-\delta,\\ D(p\|P_1)\le\lambda-\delta}}D(q\|P_0)\]
    
    \emph{Step 2:}
    Choose $\eta>0$ small enough so that $B_{\mathrm{KL}}(q^*,\eta)\subset B_{\mathrm{TV}}(q^*,\delta/2)$.
    This is possible due to Pinsker's inequality. The adversary attacks iff $P_{X^n}\in B_{\mathrm{TV}}(q^*,\delta/2) $.
    
    \emph{Step 3:}
    Set $p_n^*:=\arg\min_{p_n\in\Delta_n} D(p_n\|p^*)$.
    Then, $p_n^*$ is the target of the adversary's attack.
    We need to ensure that, the adversary will have enough budget to change $P_{X^n}$ into $p_n^*$.
    Since $D(p_n^*\|p^*)$ tends to 0 as $n\rightarrow\infty$, using Pinsker's inequality, there exists an $N$ such that for $n>N$, $d_{\mathrm{TV}}(p_n^*,p^*) \le \delta/2$.
    Therefore, $\forall n>N$, 
    \begin{align*}
    d_{\mathrm{TV}}(P_{X^n},p_n^*)\le{}& d_{\mathrm{TV}}(P_{X^n},q^*) + d_{\mathrm{TV}}(q^*,p^*) + d_{\mathrm{TV}}(p^*,p_n^*)\\
    ={}& \delta/2 + (\varepsilon-\delta) + \delta/2 = \varepsilon
    \end{align*}
    Therefore, the adversary has enough budget to change $P_{X^n}$ to $p_n^*$ for $n>N$.

    \emph{Step 4:}
    We know that the adversary can convert any sequence of a type $q\in B_{\mathrm{KL}}(q^*,\eta)$ to some sequence of type $p_n^*$. We now describe how to (randomly) pick a sequence in the type $p_n^*$ so that, the resulting distribution on $Y^n$ is invariant under permutation. This way, $P_{Y^n}$ will be a sufficient statistic for deciding whether or not to declare 1 based on $Y^n$.

    To change $P_{X^n}$ to $p_n^*$:
    Initialise $Y^n=X^n, Z^n=0^n$.
    \begin{enumerate}
        \item Pick an $x\in\cX$ with $P_{Y^n}(x)>p_n^*(x)$, pick a $y\in\cX$ with $P_{Y^n}(y)<p_n^*(y)$.
        \item Pick $i$ uniformly at random from $\{i:Y_i=x\}$. Set $Z_i=1$ and $Y_i=y$.
    \end{enumerate}
    Repeat the steps 1) and 2) until the $P_{Y^n}$ is of type $p_n^*$. Since $d_{\mathrm{TV}}(P_{X^n},p_n^*) \le\varepsilon$, the resulting $Z^n$ will have Hamming weight at most $n\varepsilon$.

    Suppose the (randomized) test is given by $g:\Delta_n\rightarrow[0,1]$, so that probability of declaring 1 when $P_{Y^n}=q$ is $g(q)$ for all $q\in \Delta_n$.

    Since $\lim_{n\rightarrow\infty}p_n^*= p^*$ and $P_1$ has full support, $\lim_{n\rightarrow\infty}D(p_n^*\|P_1) = D(p^*\|P_1)=\lambda-\delta$.
    Thus, for $n$ large enough, $D(p_n^*\|P_1)\le \lambda-\delta/2$.
    Therefore, for $n$ large enough, $\mathbb{P}_{\cH_1}[P_{Y^n}=p_n^*]\ge 2^{-n(\lambda-\delta/4)}$.
    
    Since $-\frac{1}{n} \liminf_{n\rightarrow\infty} \log E_{0\perp|1}(n)\ge \lambda$, for $n$ large enough $E_{0\perp|1}(n)\le 2^{-n(\lambda-\delta/8)}$.
    This implies
    \begin{align*}
        2^{-n(\lambda-\delta/8)}\ge{}& E_{0\perp|1}(n)\\
        ={}& \sum_{p\in\Delta_n}\overline{g}(p) \mathbb{P}_{\cH_1}[P_{Y^n}=p]\\
        \ge{}& \overline{g}(p^*) \mathbb{P}_{\cH_1}[P_{Y^n}=p^*]\\
        \ge{}& (1-g(p^*))2^{-n(\lambda-\delta/4)}\\
        \implies g(p_n^*)\ge{}& 1-2^{-n(\delta/8)}
    \end{align*}

    So, for $n$ large enough, whenever $P_{X^n}\in B_{\mathrm{KL}}(q^*,\eta)$, the adversary carries out an attack successfully and the algorithm makes an error with a constant probability.
    Thus,
    \begin{align*}
        E_{1|0}^{A}\ge{}& \mathbb{P}_{\cH_0} [P_{X^n}\in B_{\mathrm{KL}}(q^*,\eta)]\left(1-2^{-n(\delta/8)}\right)\\
        ={}& \sum_{q\in B_{\mathrm{KL}}^{(n)}(q^*,\eta)} \mathbb{P}_{\cH_0}[P_{X^n}=q]\left(1-2^{-n(\delta/8)}\right)\\
        \ge{}& \max_{q\in B_{\mathrm{KL}}^{(n)}(q^*,\eta)} \mathbb{P}_{\cH_0}[P_{X^n}=q]\left(1-2^{-n(\delta/8)}\right)\\
        \ge{}& \max_{q\in B_{\mathrm{KL}}^{(n)}(q^*,\eta)} 2^{-n(D(q\|P_0)+o(1))}\\
        \ge{}& \exp\left(-n\left(\min_{q\in B_{\mathrm{KL}}(q^*,\eta)} D(q\|P_0)+o(1)\right)\right)
    \end{align*}
    Note that $\bigcup_{n\ge1} B_{\mathrm{KL}}^{(n)}(q^*,\eta)$ is a dense subset of $B_{\mathrm{KL}}(q^*,\eta)$. Also, $P_0$ has full support, so $D(q\|P_0)$ is continuous on this domain. This is why, in the last step we can minimise over the entire $B_{\mathrm{KL}}(q^*,\eta)$.

   The following claim completes the proof.
   \begin{clm}
        \[ \lim_{\delta\rightarrow0}\min_{\substack{p,q\in\Delta:\\ d_{\mathrm{TV}}(p,q) \le\varepsilon-\delta\\ D(p\|P_1)\le\lambda-\delta}}D(q\|P_0)= \min_{\substack{p,q\in\Delta:\\ d_{\mathrm{TV}}(p,q) \le\varepsilon\\ D(p\|P_1)\le\lambda}}D(q\|P_0)\]
   \end{clm}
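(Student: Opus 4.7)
My plan is to prove the claim by sandwiching the limit between two opposite inequalities, relying on the convexity of KL divergence and total variation distance together with the full-support assumption on $P_0,P_1$.

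First, I would observe that the optimal value
\[
f(\delta) := \min_{\substack{p,q\in\Delta:\\ d_{\mathrm{TV}}(p,q)\le\varepsilon-\delta,\\ D(p\|P_1)\le\lambda-\delta}} D(q\|P_0)
\]
is monotone non-decreasing in $\delta$, since shrinking $\delta$ strictly enlarges the feasible set. Hence $\lim_{\delta\to 0^+}f(\delta)$ exists, and because every feasible point for the $\delta$-problem is also feasible for the $\delta=0$ problem, we immediately get the easy inequality $\lim_{\delta\to 0^+}f(\delta)\ge f(0)$, which is the right-hand side of the claim.

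For the reverse inequality $\lim_{\delta\to 0^+}f(\delta)\le f(0)$, I would use an explicit perturbation of a minimiser. Pick $(p^*,q^*)$ achieving the minimum on the right-hand side, and for small $\delta>0$ define
\[
p_\delta := (1-\alpha_\delta)p^* + \alpha_\delta P_1, \qquad q_\delta := (1-\beta_\delta)q^* + \beta_\delta p_\delta,
\]
with $\alpha_\delta,\beta_\delta=\Theta(\delta)$ chosen suitably small. By convexity of $D(\cdot\|P_1)$, we have $D(p_\delta\|P_1)\le(1-\alpha_\delta)\lambda$, so choosing $\alpha_\delta=\delta/\lambda$ gives $D(p_\delta\|P_1)\le\lambda-\delta$. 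The triangle inequality yields $d_{\mathrm{TV}}(p_\delta,q^*)\le\varepsilon+\alpha_\delta\,d_{\mathrm{TV}}(P_1,p^*)\le\varepsilon+\alpha_\delta$, and a short calculation shows that taking $\beta_\delta\ge(\alpha_\delta+\delta)/(\varepsilon+\alpha_\delta)=\Theta(\delta)$ forces $d_{\mathrm{TV}}(p_\delta,q_\delta)=(1-\beta_\delta)\,d_{\mathrm{TV}}(p_\delta,q^*)\le\varepsilon-\delta$. Thus $(p_\delta,q_\delta)$ is feasible for the $\delta$-problem.

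Finally, to control the objective, convexity of $D(\cdot\|P_0)$ gives $D(q_\delta\|P_0)\le(1-\beta_\delta)D(q^*\|P_0)+\beta_\delta D(p_\delta\|P_0)$. Since $P_0$ has full support, $D(\cdot\|P_0)$ is bounded on $\Delta$, so $\beta_\delta\to 0$ forces $D(q_\delta\|P_0)\to D(q^*\|P_0)$, yielding $\lim_{\delta\to 0^+}f(\delta)\le D(q^*\|P_0)=f(0)$. The main obstacle is simply the careful choice of $(\alpha_\delta,\beta_\delta)$ so that \emph{both} the total-variation and KL constraints are tightened by an amount $\delta$ simultaneously; this is where the convex-combination toward the ``safe'' reference distributions $P_1$ and $p_\delta$ is essential.
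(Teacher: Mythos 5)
Your proof is correct. Both you and the paper take the same structural route: pick a minimiser $(p^*,q^*)$ of the right-hand-side problem (existence follows from compactness and continuity, which you should state explicitly but which is immediate), then perturb it to a nearby feasible point for the $\delta$-problem, and finally use continuity/boundedness of $D(\cdot\|P_0)$ under the full-support assumption to control the change in the objective. The interesting difference is that the paper's argument is existential: it invokes, without fully justifying, the existence of a $p'\in B_{\mathrm{KL}}(P_1,\lambda-\delta)\cap B_{\mathrm{TV}}(p^*,\eta)$ and of a $q'$ with $d_{\mathrm{TV}}(p',q')\le\varepsilon-\delta$ close to $q^*$. Your proof makes exactly this step explicit by mixing toward the ``safe'' reference distributions: $p_\delta=(1-\alpha_\delta)p^*+\alpha_\delta P_1$ uses convexity of $D(\cdot\|P_1)$ to tighten the KL constraint, and $q_\delta=(1-\beta_\delta)q^*+\beta_\delta p_\delta$ uses the exact scaling identity $d_{\mathrm{TV}}(p_\delta,q_\delta)=(1-\beta_\delta)\,d_{\mathrm{TV}}(p_\delta,q^*)$ to tighten the TV constraint, with a clean coupled choice of $\alpha_\delta,\beta_\delta=\Theta(\delta)$. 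What your version buys is a self-contained, quantitatively verifiable construction (you get an explicit $O(\delta)$ bound on $D(q_\delta\|P_0)-D(q^*\|P_0)$) rather than an appeal to small-ball intersections; what the paper's version buys is brevity, at the cost of leaving the existence of $p'$, $q'$ implicit. One small omission: you silently use $\lambda>0$ when setting $\alpha_\delta=\delta/\lambda$; this is justified by the paper's standing exclusion of the Chernoff--Stein (zero-exponent) regime, but deserves a one-line remark.
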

    \begin{proof}
        The minimisation problem on the right-side has a compact, non-empty domain, and $D(q\|P_0)$ is a continuous function of $(p,q)$.
        Therefore it has a minimiser $(p^*,q^*)$.
        For any $\eta>0$, we can find a $\delta>0$ small enough, so that $B_{\mathrm{KL}}(P_1,\lambda-\delta)\cap B_{\mathrm{TV}}(p^*,\eta)\ne\varnothing$.
        Pick a $p'\in B_{\mathrm{KL}}(P_1,\lambda-\delta)\cap B_{\mathrm{TV}}(p^*,\eta)$.
        There exists $q'\in\Delta$ such that $d_{\mathrm{TV}}(p',q')\le\varepsilon-\delta$ and $d_{\mathrm{TV}}(q',q^*) \le\eta+\delta$.
        Since $d_{\mathrm{TV}}(q',q^*)$ can be made arbitrarily small, we can make $D(q'\|P_0)-D(q^*\|P_0)\le\epsilon$ for any $\epsilon>0$ (due to Pinsker's inequality).
        Since $(p',q')$ is a feasible point for the minimisation problem on the left-side, for $\delta$ small enough,
        \begin{align*}
            \min_{\substack{p,q\in\Delta:\\ d_{\mathrm{TV}}(p,q)\le\varepsilon-\delta,\\
            p\in B_{KL}(P_1,\lambda-\delta)}} D(q_1\|P_0) \ge \min_{\substack{p,q\in\Delta:\\ d_{\mathrm{TV}}(p,q)\le\varepsilon,\\
            p\in B_{KL}(P_1,\lambda)}} D(q_1\|P_0)+\epsilon
        \end{align*}
        Since the left-side is an increasing function of $\delta$, this completes the proof.
    \end{proof}
\end{proof}

\end{document}